\newcommand{\val}{\boldsymbol{\alpha}}
\newcommand{\vbe}{\boldsymbol{\beta}}
\newcommand{\vmu}{\boldsymbol{\mu}}
\newcommand{\vtmu}{\tilde{\boldsymbol{\mu}}}
\newcommand{\vc}{\boldsymbol{c}}
\newcommand{\vu}{\boldsymbol{u}}
\newcommand{\vtu}{\tilde{\boldsymbol{u}}}
\newcommand{\vh}{\boldsymbol{h}}
\newcommand{\vth}{\tilde{\boldsymbol{h}}}
\newcommand{\vw}{\boldsymbol{w}}
\newcommand{\vv}{\boldsymbol{v}}
\newcommand{\vb}{\boldsymbol{b}}
\newcommand{\vone}{\boldsymbol{1}}
\newcommand{\vzero}{\boldsymbol{0}}
\newcommand{\veta}{\boldsymbol{\eta}}
\newcommand{\vpsi}{\boldsymbol{\psi}}
\newcommand{\vtpsi}{\tilde{\boldsymbol{\psi}}}
\newcommand{\vp}{\boldsymbol{p}}
\newcommand{\vg}{\boldsymbol{g}}
\newcommand{\vphi}{\boldsymbol{\phi}}
\newcommand{\vz}{\boldsymbol{z}}
\newcommand{\vx}{\boldsymbol{x}}
\newcommand{\vga}{\boldsymbol{\gamma}}
\newcommand{\mU}{\boldsymbol{U}}
\newcommand{\mtU}{\tilde{\boldsymbol{U}}}
\newcommand{\mV}{\boldsymbol{V}}
\newcommand{\mSig}{\boldsymbol{\Sigma}}
\newcommand{\mtSig}{\tilde{\boldsymbol{\Sigma}}}
\newcommand{\mLam}{\boldsymbol{\Lambda}}
\newcommand{\mI}{\boldsymbol{I}}
\newcommand{\mA}{\boldsymbol{A}}
\newcommand{\mC}{\boldsymbol{C}}
\newcommand{\mtC}{\tilde{\boldsymbol{C}}}
\newcommand{\mPhi}{\boldsymbol{\Phi}}
\newcommand{\mX}{\boldsymbol{X}}
\newcommand{\mB}{\boldsymbol{B}}
\newcommand{\mF}{\boldsymbol{F}}
\newcommand{\mY}{\boldsymbol{Y}}
\newcommand{\mDel}{\boldsymbol{\Delta}}
\newcommand{\mZ}{\boldsymbol{Z}}
\newcommand{\mtZ}{\tilde{\boldsymbol{Z}}}
\newcommand{\mGa}{\boldsymbol{\Gamma}}
\newcommand{\bR}{\mathbb{R}}
\newcommand{\cM}{\mathcal{M}}
\newcommand{\cV}{\mathcal{V}}
\newcommand{\cN}{\mathcal{N}}
\newcommand{\Pro}{\mathsf{P}}
\newcommand{\Exp}{\mathsf{E}}
\newcommand{\Tr}{\mathsf{Tr}}
\newcommand{\lse}{\text{lse}}
\newcommand{\diag}{\text{diag}}
\newcommand{\be}{\begin{equation}}
\newcommand{\ee}{\end{equation}}
\newcommand{\ba}{\begin{align}}
\newcommand{\ea}{\end{align}}
\newcommand{\nn}{\nonumber}
\journalname{Journal of Signal Processing Systems}
\begin{document}

\title{Multimodal Event Detection in Twitter Hashtag Networks
\thanks{This work was funded in part by the Consortium for Verification Technology under Department
of Energy National Nuclear Security Administration
award number DE-NA0002534, and the Army Research
Office (ARO) under grants W911NF-11-1-0391 and W911NF-12-1-0443.}
}
%\subtitle{Do you have a subtitle?\\ If so, write it here}

%\titlerunning{Short form of title}        % if too long for running head

\author{Yasin Y{\i}lmaz         \and
        Alfred O. Hero %etc.
}

%\authorrunning{Short form of author list} % if too long for running head

\institute{Y. Y{\i}lmaz \at
              Department of Electrical Engineering and Computer Science, University of Michigan, Ann Arbor, MI 48109, USA \\
              Tel.: +1-734-763-5022\\
              \email{yasiny@umich.edu}           %  \\
%             \emph{Present address:} of F. Author  %  if needed
           \and
           A. Hero \at
              Department of Electrical Engineering and Computer Science, University of Michigan, Ann Arbor, MI 48109, USA
}

\date{Received: date / Accepted: date}
% The correct dates will be entered by the editor

\maketitle

\begin{abstract}
Event detection in a multimodal Twitter dataset is considered. We treat the hashtags in the dataset as instances with two modes: text and geolocation features. The text feature consists of a bag-of-words representation. The geolocation feature consists of geotags (i.e., geographical coordinates) of the tweets. Fusing the multimodal data we aim to detect, in terms of topic and geolocation, the interesting events and the associated hashtags. To this end, a generative latent variable model is assumed, and a generalized expectation-maximization (EM) algorithm is derived to learn the model parameters. The proposed method is computationally efficient, and lends itself to big datasets. Experimental results on a Twitter dataset from August 2014 show the efficacy of the proposed method. 

\keywords{Event detection \and Twitter hashtag networks \and Multimodal data fusion \and Generative latent variable model \and Variational EM algorithm}
% \PACS{PACS code1 \and PACS code2 \and more}
% \subclass{MSC code1 \and MSC code2 \and more}
\end{abstract}

\section{Introduction}
\label{intro}

Twitter is the most popular microblogging service and the second most popular social network with over 300 million active users generating more than 500 million tweets per day as of 2015 \cite{Stats}. Its user-generated content from all over the world provides a valuable source of data for researchers from a variety fields such as machine learning, data mining, natural language processing, as well as social sciences. Twitter data has been used for various tasks, e.g., event detection \cite{Farzindar15}, sentiment analysis \cite{Liu12}, breaking news analysis \cite{Amer12}, rumor detection \cite{Zhao15}, community detection \cite{Brandon15}, election results prediction \cite{Tumasjan10}, and crime prediction \cite{Wang12}.

Hashtags, which are keywords preceded by the hash sign \#, are in general used to indicate the subject of the tweets. Hence, they provide useful information for clustering tweets or users. However, it is a noisy information source since hashtags are generated by users, and sometimes convey inaccurate or even counterfactual information. A small percentage of users (around 2\%) also geotag their tweets. Given the 500 million tweets per day, geotags also constitute an important information source. 

The detection of real-world events from conventional media sources has long been studied \cite{Yang98}. Event detection in Twitter is especially challenging because tweets use microtext, which is an informal language with a preponderance of abbreviated words, spelling and grammar errors. There are also many tweets of dubious value, consisting of nonsense, misrepresentations, and rumors. Much of the work on event detection in Twitter has considered a diversity of event types. For instance, \cite{Phuvipa10} considers unsupervised breaking news detection; \cite{Popescu11} considers supervised detection of controversial news events about celebrities; \cite{Benson11} addresses supervised musical event detection; and \cite{Sakaki10} deals with supervised natural disaster events monitoring. There are also a significant number of papers that consider unsupervised detection of events that do not require prespecification of the event type of interest, e.g., \cite{Petrovic10,Becker11,Long11,Weng11,Cordeiro12}.

In this paper, we introduce a new unsupervised event detection approach to Twitter that exploits the multimodal nature of the medium. Data is pre-processed to form a network of hashtags. In this network, each unique hashtag is an instance with multimodal features, namely text and geolocation. For a hashtag, the text feature is given by the bag-of-words representation over the collection of words from tweets that use the hashtag. The geolocation feature of a hashtag consists of the geotags of the tweets that mention the hashtag. The proposed approach can detect events in terms of both topic and geolocation through multimodal data fusion. To fuse the multimodal data we use a probabilistic generative model, and derive an expectation-maximization (EM) algorithm to find the maximum likelihood (ML) estimates of the model parameters. The proposed model can be seen as a multimodal factor analysis model \cite{mmfa,Khan10}. However, it is more general than the model in \cite{Khan10} in terms of the considered probabilistic models, and also the temporal dimension that is inherent to our problem. 

Fusing disparate data types, such as text and geolocation in our case, poses significant challenges. In \cite{Adali15}, source separation is used to fuse multimodal data, whereas \cite{Bramon12} follows an information-theoretic approach. Multimodal data fusion is studied for different applications such as multimedia data analysis \cite{Wu04} and brain imaging \cite{Sui12}. Multimodal feature learning via deep neural networks is considered in \cite{Ngiam11}. The literature on multi-view learning, e.g., \cite{Chris08,He11,Sun13}, is also related to the problem of multimodal data fusion. Our contributions in this paper are twofold. Firstly, we propose a intuitive framework that naturally extends to the exponential family of distributions. Secondly, based on a simple generative model, the proposed algorithm is computationally efficient, and thus applicable to big datasets. 

The paper is organized as follows. In Section \ref{sec:prob}, we formulate the multimodal event detection problem, and propose a generative latent variable model. Then, a generalized EM algorithm is derived in Section \ref{sec:EM}. Finally, experiment results on a Twitter dataset are presented in Section \ref{sec:exp}, and the paper is concluded in Section \ref{sec:conc}. We represent the vectors and matrices with boldface lowercase and uppercase letters, respectively. 

%Multimodal data fusion: factor analysis, refs, advantages, uses

\section{Problem Formulation}
\label{sec:prob}

\subsection{Observation Model}
\label{sec:obs}

We consider $P$ hashtags with text (i.e., collection of words used in tweets) and geotag (i.e., user geolocation) features, as shown in Table \ref{tab:htag}.

\begin{table}[htb]
\caption{Sample hashtags with text and geotag features.}
\label{tab:htag}
\begin{tabular}{| p{0.2\linewidth} | p{0.3\linewidth} | p{0.37\textwidth} |}
\hline
{\bf Hashtag} & {\bf Text} & {\bf Geotag (Latitude, Longitude)}  \\ 
\hline
\#Armstrong & \#Oprah mag 'alles vragen' aan Lance \#Armstrong. Uiteraard! Looking forward to the \#Lance \#Armstrong interview next week! \ldots & (52.4\degree N, 4.9\degree E) \newline (43.5\degree N, 79.6\degree W) \newline {\centering \ldots} \\
\hline
\#Arsenal & Sementara menunggu Team Power beraksi..\#Arsenal First game of 2013, lets start it off with a our fifth win in a row! Come on you Gunners! \#Arsenal  & (8.6\degree S, 116.1\degree E) \newline (23.7\degree N, 58.2\degree E) \newline \ldots \\
\hline
\end{tabular}
\end{table}

We assume a model in which each word in a tweet that uses the $i$-th hashtag is independently generated from the multinomial distribution with a single trial (i.e., categorical distribution) $\cM(1; p_{i1},\ldots,p_{iD})$, where $p_{id}$ is the probability of the $d$-th word for the $i$-th hashtag, and $D$ is the dictionary size. In this model, the word counts $\vh_i=[h_{i1},\ldots,h_{iD}]^T$ for the $i$-th hashtag are modeled as
\be
	\vh_i \sim \cM(M_i;p_{i1},\ldots,p_{iD}), ~i=1,\ldots,P, \nn
\ee
where $M_i$ is the number of dictionary words used in the tweets for the $i$-th hashtag, i.e., $M_i=\sum_{d=1}^D h_{id}$. To this end, we use the bag of words representation for the hashtags (Fig. \ref{fig:bow}). 

\begin{figure}[htb]
\includegraphics[width=\textwidth]{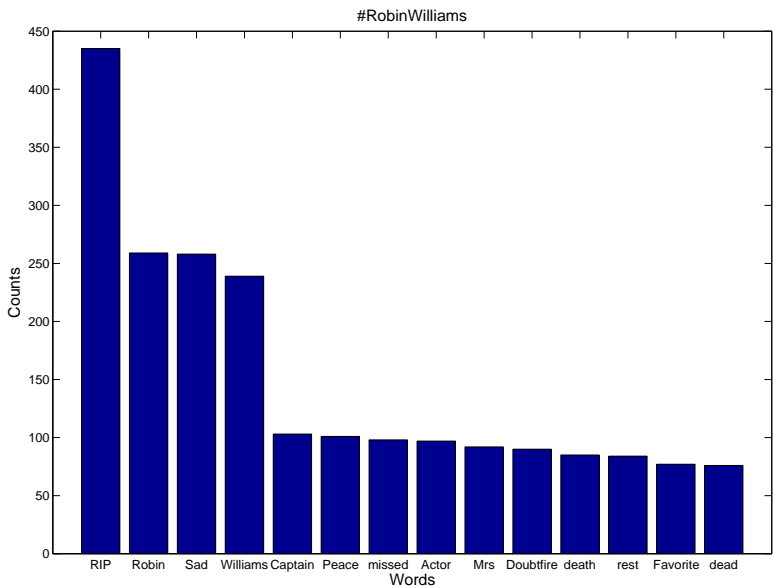}
\caption{A sample bag of words representation for the hashtag \#RobinWilliams.}
\label{fig:bow}       % Give a unique label
\end{figure}

The geolocation data of each tweet is a geographical location represented by a spherical coordinate (latitude and longitude). This coordinate is modeled using the 3-dimensional von Mises-Fisher (vMF) distribution, which is an extension of the Gaussian distribution to the unit sphere \cite{Mardia00} (Fig. \ref{fig:vmf}). We first convert the geographical coordinates (latitude, longitude) to the Cartesian coordinates ($x,y,z$), where $x^2+y^2+z^2=1$. Specifically, in our model, it is assumed that the geolocation of the $n$-th tweet that mentions the $i$-th hashtag is generated independently from the other tweets as follows
\be
	\vw_{in} \sim \cV(\val_i,\kappa_i),  ~i=1,\ldots,P,~n=1,\ldots,N_i, \nn
\ee
where $\val_i \in \bR^3,~\val_i^T\val_i=1,$ is the mean direction, $\kappa_i \ge 0$ is the concentration parameter, and $N_i$ is the number of geotagged tweets for the $i$-th hashtag. Larger $\kappa_i$ means more concentrated distribution around $\val_i$. Therefore, a local hashtag, such as \#GoBlue, which is used by supporters of the University of Michigan sports teams, requires a large $\kappa$, whereas a global hashtag, such as \#HalaMadrid, which means ``Go Madrid" and is used by the fans of the Real Madrid soccer team, requires a small $\kappa$ (Fig. \ref{fig:geo}). This difference in $\kappa$ is due to the fact that the Real Madrid supporters are well distributed around the globe, while the University of Michigan supporters are mostly confined to North America. 

\begin{figure}[htb]
\includegraphics[width=\textwidth]{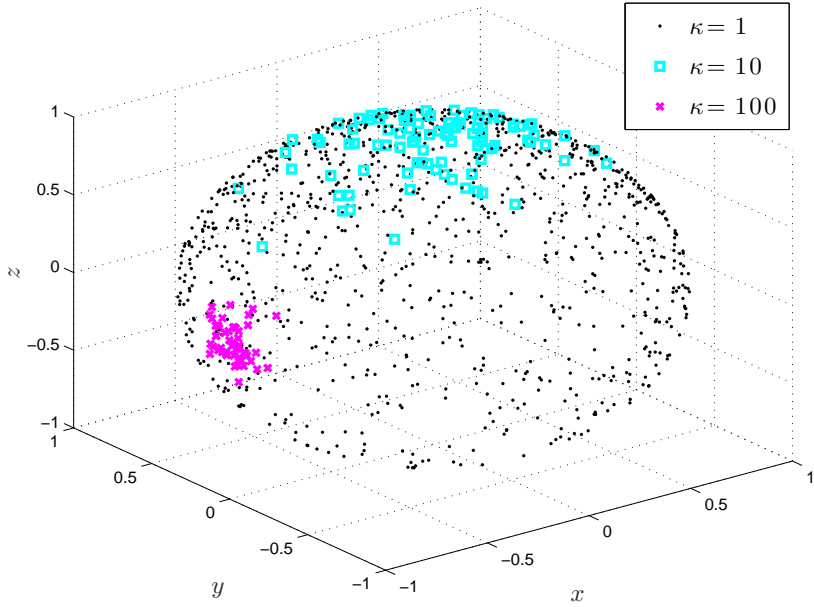}
\caption{Samples from the 3-dimensional von Mises-Fisher distribution with different concentration parameter values $\kappa=1, 10, 100$ describing the spread of the distribution around random mean directions. The case $\kappa=1$ produces the uniform distribution on the sphere.}
\label{fig:vmf}       % Give a unique label
\end{figure}

\begin{figure}[htb]
\includegraphics[width=\textwidth]{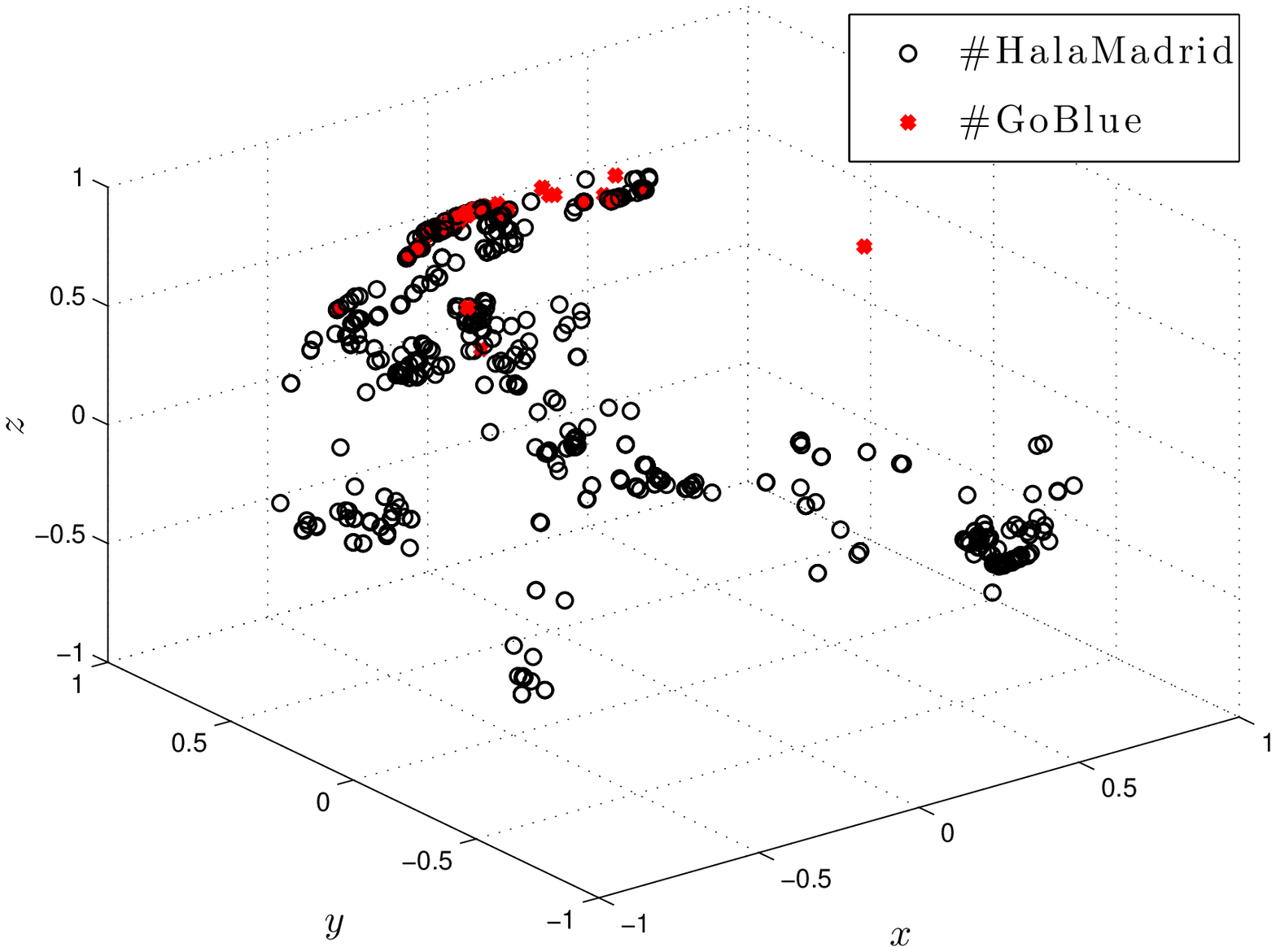}
\caption{Geolocations for the hashtags \#HalaMadrid (used for the Real Madrid soccer team) and \#GoBlue (used for the University of Michigan athletics) in terms of the Cartesian coordinates. The estimated concentration parameters for the von Mises-Fisher distribution are $\kappa_{\text{madrid}}=1.3302$ and $\kappa_{\text{mich}}=44.6167$, representing the wider global interest in Real Madrid soccer team as contrasted to the US-centric interest in University of Michigan sports teams.}
\label{fig:geo}       % Give a unique label
\end{figure}

\subsection{Generative Latent Variable Model}
\label{sec:gen}

Some hashtags are created by users as a result of an underlying event in {\em time} and {\em space}, which we call a {\em generative event}. For instance, after Robin Williams' death, many hashtags such as \#RobinWilliams, \#RIPRobinWilliams, \#RIPRobin, \#mrsdoubtfire have been used to commemorate him. On the other hand, some hashtags are more spread out over time, such as \#jobs, \#love, \#Healthcare, \#photo. With a slight abuse of the terminology, we also consider such an underlying topic as a {\em generative event}. In addition to the topic/text feature, a generative event (time-dependent or -independent) possesses also a spatial feature due to the event's geolocation (e.g., Asia, America) or simply due to the language (e.g., English, Spanish). 

We know that an event can generate multiple hashtags. Although there is usually a single event responsible for the generation of a hashtag, for generality, we let multiple events contribute to a single hashtag. In our generative model, $K$ events linearly mix in the {\em natural parameters} of the multinomial and vMF distributions to generate the text and geolocation features of each hashtag, respectively. Let $\vc_i \in \bR_+^K$ denote the mixture coefficients of $K$ events for the $i$-th hashtag, where $\bR_+$ is the set of nonnegative real numbers. Also let 
\be
	\mU = \left[ \vu_1 \ldots \vu_K \right] = \left[ \vu_{(1)}^T \ldots \vu_{(D)}^T \right]^T, ~ \vu_k \in \bR^D, ~ \vu_{(d)} \in \bR^{1\times K}, \nn
\ee
denote the event scores for the words in the dictionary; and 
\be
	\mV = [\vv_1 \ldots \vv_K],~ \vv_k \in \bR^3, \nn
\ee
denote the event geolocations in the Cartesian coordinates. Then, in our model, the mean of the vMF distribution is given by the normalized linear mixture 
\be
	\val_i = \frac{\mV \vc_i}{\|\mV \vc_i\|}, ~~i=1,\ldots,P,  \nn
\ee
where $\|\cdot\|$ is the $l^2$-norm normalization is required to ensure that $\val_i$ is on the unit sphere;
and the multinomial probabilities are given by the softmax function of the linear mixture $\vu_{(d)} \vc_i$, i.e., 
\be
	p_{id} = \frac{e^{\vu_{(d)} \vc_i}}{\sum_{j=1}^D e^{\vu_{(j)} \vc_i}}, ~~i=1,\ldots,P, ~~d=1,\ldots,D. \nn
\ee
That is,
\begin{align}
	\label{eq:mult_like}
	\vh_i &\sim \cM\left( M_i; \frac{e^{\vu_{(1)} \vc_i}}{\sum_{j=1}^D e^{\vu_{(j)} \vc_i}},\ldots,\frac{e^{\vu_{(d)} \vc_i}}{\sum_{j=1}^D e^{\vu_{(j)} \vc_i}} \right), ~~i=1,\ldots,P \\
	\label{eq:vmf_like}
	\vw_{in} &\sim \cV\left( \frac{\mV \vc_i}{\|\mV \vc_i\|}, \kappa_i \right),  ~~i=1,\ldots,P, ~~n=1,\ldots,N_i,
\end{align}
where $N_i$ is the number of geotagged tweets for the $i$-th hashtag.
We assume a Gaussian prior for the latent variable vector $\vu_k \in \bR^D$
\be
	\label{eq:mult_prior}
	\vu_k \sim \cN(\vmu_k,\mSig_k), ~~ k=1,\ldots,K,
\ee
and a vMF prior for $\vv_k \in \bR^3$
\be
	\label{eq:vmf_prior}
	\vv_k \sim \cV(\vbe_k,s_k), ~~ k=1,\ldots,K,
\ee
since the conjugate prior to the vMF likelihood with unknown mean and known concentration is also vMF \cite{Mardia76}.

The graphical model in Fig. \ref{fig:graph} depicts the proposed generative latent variable model. The proposed model can be regarded as a {\em multimodal factor analysis} model \cite{mmfa} since it combines features from two disparate domains(geotag and text). In classical factor analysis \cite{Harman76}, the mean of a Gaussian random variable is modeled with the linear combination $\vc^T\vu$ of factor scores in $\vu$, where the coefficients in $\vc$ are called factor loadings.The number of factors is typically much less than the number of variables modeled, as $K \ll P$ in our case. In the proposed model, the generative events correspond to the factors with the multimodal scores $\{\vu_k\}$ and $\{\vv_k\}$ for the multinomial and vMF observations, respectively. For both modalities, the natural parameters are modeled with the linear combination of the factor scores using the {\em same factor loading vector} $\vc_i$ for the $i$-th hashtag. In the multinomial distribution, the softmax function maps the natural parameters to the class probabilities, whereas in the vMF distribution, the natural parameter coincides with the (scaled) mean. For each hashtag $i$, the factor loading vector $\vc_i$ correlates the two observation modalities: text and geolocation. 

\begin{figure}[tb]
\includegraphics[width=\textwidth]{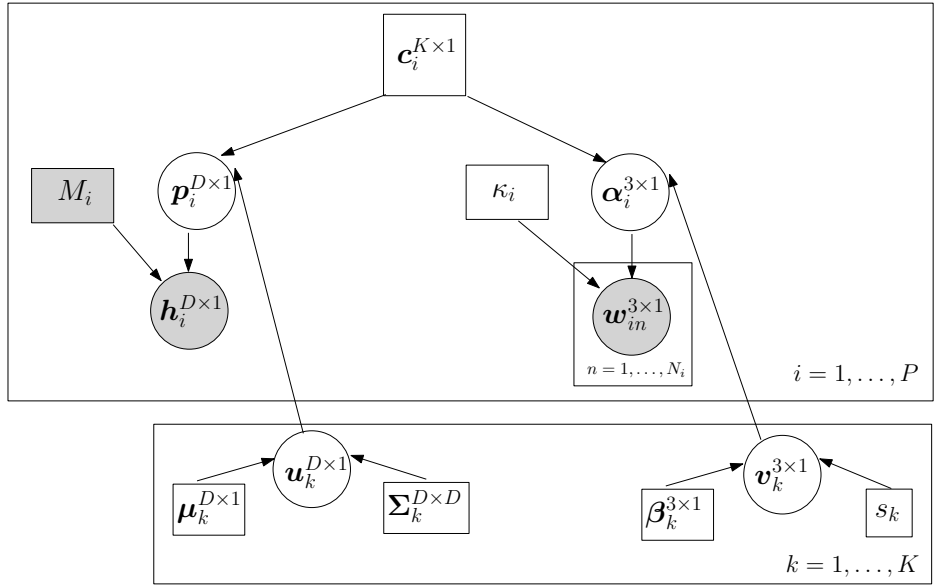}
\caption{Generative graphical model. Plate representation is used to show repeated structures. Circles and rectangles represent random and deterministic variables, respectively. Observed variables are shaded.}
\label{fig:graph}       % Give a unique label
\end{figure}

Next we will present an EM algorithm to learn the parameters of the proposed model from the data. 

\section{EM Algorithm}
\label{sec:EM}

We propose a generalized EM (GEM) algorithm that consists of two separate EM steps for the two modalities, and a coordinating M step for the mixture coefficients $\{\vc_i\}$. Specifically, at each iteration of the GEM algorithm, the vMF EM steps are followed by the multinomial EM steps, which are followed by the M step for $\{\vc_i\}$. The individual EM steps for vMF and multinomial are coupled only through $\{\vc_i\}$, and independent otherwise. In the proposed GEM algorithm, the global likelihood function is monotonically increasing. 

\subsection{Von Mises-Fisher Parameters}
\label{sec:vmf}

We would like to find the ML estimates of the parameters $\vbe_k,s_k,\kappa_i$ under the graphical model depicted in the right branch of Fig. \ref{fig:graph}. We take a variational EM approach to deal with the latent variable vectors $\{\vv_k\}$. 

\subsubsection{E-Step}

Starting with the E-step we seek the posterior probability density function (pdf) $\Pro(\{\vv_k\} | \{\vw_{in}\},\theta)$, where $\theta=\{\vbe_k,s_k,\kappa_i, \vc_i\}$. From \eqref{eq:vmf_like} and \eqref{eq:vmf_prior}, we know that the likelihood $\Pro(\{\vw_{in}\} | \{\vv_k\},\theta )$ and the prior $\Pro(\{\vv_k\} | \theta)$ are both vMF, hence the joint distribution is given by
\begin{align}
	\Pro(\{\vv_k\}, \{\vw_{in}\} | \theta) &= \Pro(\{\vw_{in}\} | \{\vv_k\},\theta) ~ \Pro(\{\vv_k\} | \theta) \nn\\
	&= \prod_{i=1}^P C(\kappa_i)^{N_i} \prod_{n=1}^{N_i}  \exp\left( \kappa_i \vw_{in}^T \frac{\mV \vc_i}{\|\mV \vc_i\|} \right) \prod_{k=1}^K C(s_k) \exp\left( s_k \vv_k^T \vbe_k \right) \nn,
\end{align}
where
\be
	\label{eq:C_def}
	C(x) = \frac{x^{1/2}}{(2\pi)^{3/2} I_{1/2}(x)} = \frac{x}{2\pi(e^{x}-e^{-x})}
\ee
is the normalization factor in the 3-dimensional vMF pdf, with $I_y(x)$ being the modified Bessel function of the first kind at order $y$. 
Reorganizing the terms we get
\begin{align}	
	\Pro(\{\vv_k\}, \{\vw_{in}\} | \theta) &= \prod_{i=1}^P C(\kappa_i)^{N_i} \prod_{k=1}^K C(s_k) ~\nn\\
	& \quad \quad \exp\left( \sum_{i=1}^P \sum_{n=1}^{N_i} \kappa_i \vw_{in}^T \sum_{k=1}^K \frac{c_{ik}}{\|\mV \vc_i\|} \vv_k + \sum_{k=1}^K s_k \vv_k^T \vbe_k \right) \nn\\
	\label{eq:vmf_E}
	&= \prod_{i=1}^P C(\kappa_i)^{N_i} \prod_{k=1}^K C(s_k) ~\nn\\
	& \quad \quad \prod_{k=1}^K \exp\left( \vv_k^T \left( \sum_{i=1}^P \sum_{n=1}^{N_i} \frac{c_{ik}}{\sqrt{\vc_i^T \mV^T\mV \vc_i}} \kappa_i \vw_{in} + s_k \vbe_k \right) \right).
\end{align}
In the alternative expression for the joint pdf
\be
	\Pro(\{\vv_k\}, \{\vw_{in}\} | \theta) = \Pro(\{\vv_k\} | \{\vw_{in}\},\theta) ~ \Pro(\{\vw_{in}\} | \theta), \nn
\ee
the dependency on $\{\vv_k\}$ appears only in the posterior pdf, hence $\Pro(\{\vv_k\} | \{\vw_{in}\},\theta)$ lies in the exponential term in \eqref{eq:vmf_E}, which resembles the vMF pdf except the dependence of the normalization factor on $\{\vv_k\}$. The diagonal entries of $\mV^T\mV$ are $\vv_k^T\vv_k=1$; and the off-diagonal entries are $\vv_j^T\vv_k \le 1,~ j\not=k$. Since $c_{ik}\ge0,~k=1,\ldots,K$, the inequality $\vc_i^T \mV^T\mV \vc_i \le \vc_i^T \vone_K\vone_K^T \vc_i$ holds, where $\vone_K$ is the vector of $K$ ones. To make \eqref{eq:vmf_E} tractable we replace $\vc_i^T \mV^T\mV \vc_i$ with $\vc_i^T \vone_K\vone_K^T \vc_i$ and obtain the lower bound
\begin{align}	
	\label{eq:vmf_lb}
	\Pro(\{\vv_k\}, \{\vw_{in}\} | \theta) &\ge Q_v(\{\vv_k\}, \theta) \nn\\
	&= \prod_{i=1}^P C(\kappa_i)^{N_i} \prod_{k=1}^K C(s_k) ~\nn\\
	& \quad \quad \prod_{k=1}^K \exp\left( \vv_k^T \left( \sum_{i=1}^P \sum_{n=1}^{N_i} \frac{c_{ik}}{\sqrt{\vc_i^T \vone_K\vone_K^T \vc_i}} \kappa_i \vw_{in} + s_k \vbe_k \right) \right).
\end{align}
To put \eqref{eq:vmf_lb} into the standard form of the vMF pdf we normalize the term in the inner parentheses and obtain 
\begin{align}
	\label{eq:vmf_post}
	Q_v(\{\vv_k\}, \theta) &= \prod_{i=1}^P \prod_{n=1}^{N_i} \prod_{k=1}^K \frac{C(\kappa_i)^{N_i} ~C(s_k)}{C(r_k)} ~\underbrace{C(r_k) \exp\left( r_k \vv_k^T \vb_k \right)}_{q_v(\vv_k)}, \\
	\label{eq:vmf_E_mean}
	\vb_k &= \frac{\sum_{i=1}^P \frac{c_{ik}}{\sum_{k=1}^K c_{ik}} \kappa_i \sum_{n=1}^{N_i} \vw_{in} + s_k \vbe_k}{\left\| \sum_{i=1}^P \frac{c_{ik}}{\sum_{k=1}^K c_{ik}} \kappa_i \sum_{n=1}^{N_i} \vw_{in} + s_k \vbe_k \right\|}, \\
	\label{eq:vmf_E_conc}
	r_k &= \left\| \sum_{i=1}^P \frac{c_{ik}}{\sum_{k=1}^K c_{ik}} \kappa_i \sum_{n=1}^{N_i} \vw_{in} + s_k \vbe_k \right\|,
\end{align}
where $\vb_k$ is the mean direction and $r_k$ is the concentration parameter.
We approximate the posterior $\Pro(\vv_k | \{\vw_{in}\},\theta)$ with the vMF distribution $q_v(\vv_k)$ for $k=1,\ldots,K$. 

\subsubsection{M-Step}

In the M-step, we find the parameters $\vbe_k,s_k,\kappa_i$ that maximize the expected value of the lower bound for the complete-data log-likelihood, which from \eqref{eq:vmf_lb} is given by 
\begin{multline}
\label{eq:vmf_M}
	\Exp_{q_v(\vv_k)} \left[ \log Q_v(\{\vv_k\}, \theta) \right] = \sum_{k=1}^K \left( \sum_{i=1}^P \frac{c_{ik}}{\sum_{k=1}^K c_{ik}} \kappa_i \sum_{n=1}^{N_i} \vw_{in} + s_k \vbe_k \right)^T \vb_k \\ + \sum_{k=1}^K \log C(s_k) + \sum_{i=1}^P N_i \log C(\kappa_i), 
\end{multline}
where the expectation is taken over $q_v(\vv_k)$, which approximates  the posterior pdf $\Pro(\vv_k | \{\vw_{in}\},\theta)$ (see \eqref{eq:vmf_post}). 

We start with the estimator $\hat{\kappa}_i$ which is given by
\be
	\hat{\kappa}_i = \arg\max_{\kappa_i} ~\kappa_i \left( \sum_{n=1}^{N_i} \vw_{in} \right)^T \sum_{k=1}^K \frac{c_{ik}}{\sum_{k=1}^K c_{ik}} \vb_k + N_i \log C(\kappa_i). \nn
\ee
{We next show that $\hat{\kappa}_i$ always makes the derivative with respect to $\kappa_i$ zero, i.e.,} 
\be
	\label{eq:tau}
	-\frac{C'(\hat{\kappa}_i)}{C(\hat{\kappa}_i)} = \left( \frac{1}{N_i} \sum_{n=1}^{N_i} \vw_{in} \right)^T \sum_{k=1}^K \frac{c_{ik}}{\sum_{k=1}^K c_{ik}} \vb_k \triangleq \tau_i. 
\ee
From \eqref{eq:C_def}, we write the derivative $C'(\hat{\kappa}_i)$ as
\begin{align}
	C'(\hat{\kappa}_i) = \frac{\hat{\kappa}_i^{1/2}}{(2\pi)^{3/2} I_{1/2}(\hat{\kappa}_i)} \left( \frac{1}{2\hat{\kappa}_i} - \frac{I_{1/2}'(\hat{\kappa}_i)}{I_{1/2}(\hat{\kappa}_i)} \right) = C(\hat{\kappa}_i) \left( \frac{1}{2\hat{\kappa}_i} - \frac{I_{1/2}'(\hat{\kappa}_i)}{I_{1/2}(\hat{\kappa}_i)} \right). \nn
\end{align}
Hence, 
\be
	-\frac{C'(\hat{\kappa}_i)}{C(\hat{\kappa}_i)} = \frac{\hat{\kappa}_i I_{1/2}'(\hat{\kappa}_i) - I_{1/2}(\hat{\kappa}_i)/2}{\hat{\kappa}_i I_{1/2}(\hat{\kappa}_i)}. \nn
\ee
Using \eqref{eq:tau} and the following recurrence relation \cite[Section 9.6.26]{Abram72}
\be
	x I_{3/2}(x) = x I_{1/2}'(x) - I_{1/2}(x)/2 \nn
\ee
we get
\be
	\label{eq:Bessel}
	\frac{I_{3/2}(\hat{\kappa}_i)}{I_{1/2}(\hat{\kappa}_i)} = \tau_i,
\ee
{which can be rewritten as \cite[Section 10.2.13]{Abram72}
\be
	\label{eq:Bessel2}
	\coth(\hat{\kappa}_i) -\frac{1}{\hat{\kappa}_i} = \tau_i.
\ee
We can also obtain \eqref{eq:Bessel2} using \eqref{eq:C_def} and \eqref{eq:tau}.
Fig. \ref{fig:derivative} shows that the left side of \eqref{eq:Bessel2} is a continuous function taking values in $[0,1]$. Since $\tau_i$, defined in \eqref{eq:tau}, is also in $[0,1]$, there always exists a unique solution to \eqref{eq:Bessel2}.
However, there is no analytical solution to \eqref{eq:Bessel} or \eqref{eq:Bessel2}; hence we resort to approximating $\hat{\kappa}_i$. A numerical solution easily follows using a root-finding method such as the bisection method.
\begin{figure}[tb]
\includegraphics[width=.5\textwidth]{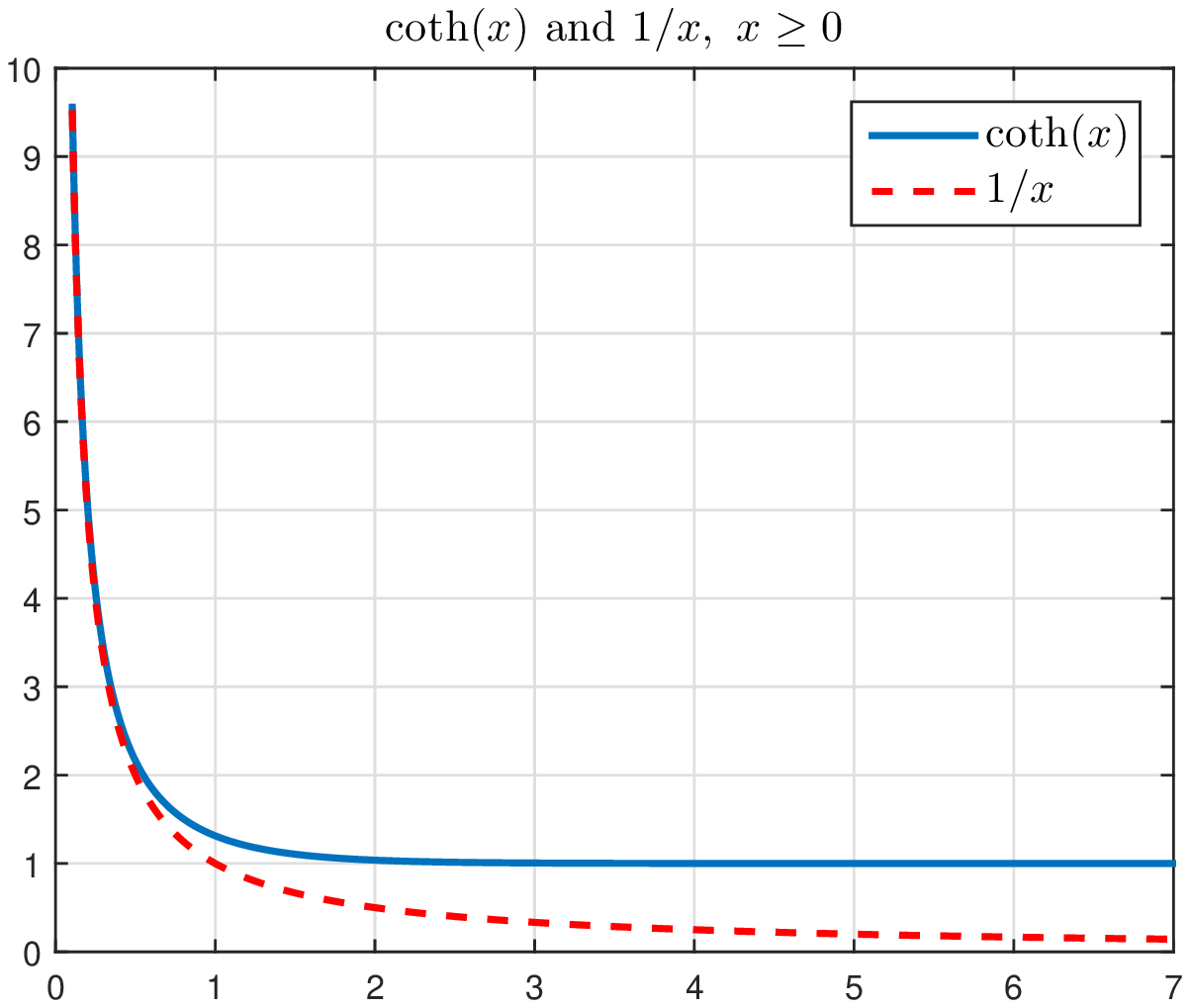}
\includegraphics[width=.5\textwidth]{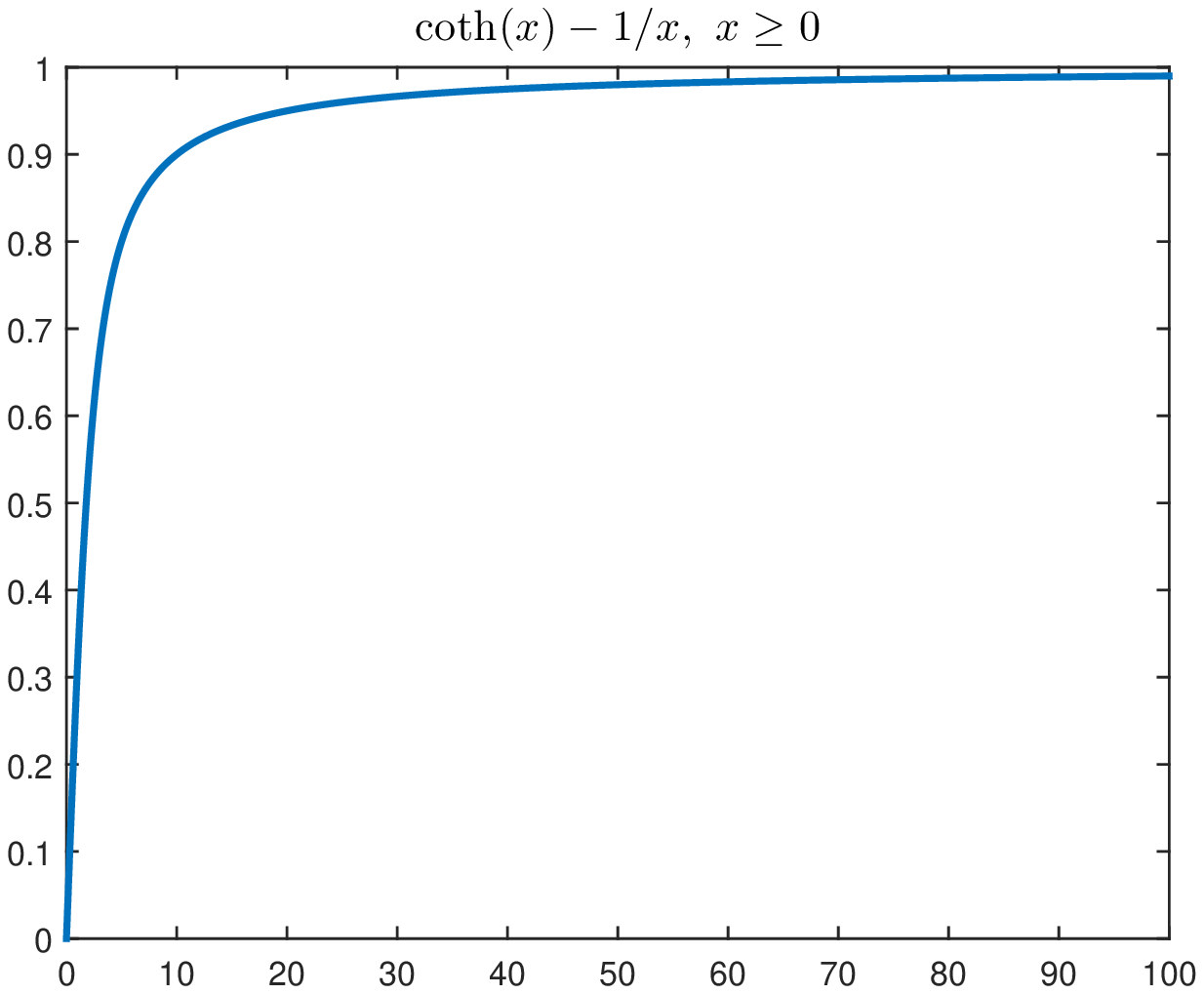}
\caption{For nonnegative numbers, the hyperbolic cotangent and the multiplicative inverse (left figure); and their difference (right figure).}
\label{fig:derivative} 
\end{figure}}

Alternatively, using the following continuing fraction representation
\be	
	\frac{I_{3/2}(\hat{\kappa}_i)}{I_{1/2}(\hat{\kappa}_i)} = \frac{1}{\frac{3}{\hat{\kappa}_i} + \frac{1}{\frac{5}{\hat{\kappa}_i} + \cdots}} = \tau_i \nn
\ee
we can approximate $\hat{\kappa}_i$ as \cite{Banerjee05}
\begin{align}
	\frac{1}{\tau_i} &\approx \frac{3}{\hat{\kappa}_i} + \tau_i \nn\\
	\hat{\kappa}_i &\approx \frac{3\tau_i}{1-\tau_i^2}. \nn
\end{align}
Furthermore, an empirical correction is also provided in \cite{Banerjee05}:
\be
	\label{eq:kappa}
	\hat{\kappa}_i \approx \frac{3\tau_i-\tau_i^3}{1-\tau_i^2},
\ee
which is constrained to be nonnegative for feasibility. We introduce a Lagrange multiplier $\lambda>0$, replacing $\tau_i$ with $\tilde{\tau}_i=\tau_i+\lambda$ to enforce this non-negativity constraint. Due to complementary slackness, this leads to the following estimator
\be
	\label{eq:kappa_est}
	\hat{\kappa}_i \approx \max\left\{ 0, \frac{3\tau_i-\tau_i^3}{1-\tau_i^2} \right\}.
\ee

Similar to $\kappa_i$ \eqref{eq:tau}--\eqref{eq:kappa_est}, from \eqref{eq:vmf_M}, we estimate $s_k$ with
\begin{align}
	\hat{s}_k &= \arg\max_{s_k} ~s_k \vbe_k^T \vb_k + \log C(s_k). \nn\\
	\label{eq:s_est}
	&\approx \max\left\{ 0, \frac{3\vbe_k^T \vb_k-(\vbe_k^T \vb_k)^3}{1-(\vbe_k^T \vb_k)^2} \right\}.
\end{align}

Since $\vbe_k$ is a mean direction on the unit sphere, it has to satisfy $\vbe_k^T \vbe_k =1$.Therefore, from \eqref{eq:vmf_M}, our estimator is given by
\be
	\hat{\vbe}_k = \arg\max_{\vbe} \vbe_k^T s_k \vb_k ~~~\text{subject to}~~~ \vbe_k^T \vbe_k =1. \nn
\ee
Maximum of $\vbe_k^T s_k \vb_k$ is attained when the angle between $\vbe_k$ and $s_k \vb_k$ is zero, i.e., $\hat{\vbe}_k=c ~s_k \vb_k$. Since the feasible set is the unit sphere, $\hat{\vbe}_k=\frac{s_k \vb_k}{\|s_k \vb_k\|}=\frac{\vb_k}{\|\vb_k\|}$. The posterior mean direction $\vb_k$, given by \eqref{eq:vmf_E_mean}, is already on the unit sphere, hence
\be
	\hat{\vbe}_k=\vb_k.
\ee

\subsection{Multinomial Parameters}
\label{sec:mult}

Note that there are $D-1$ degrees of freedom in the multinomial class probabilities due to the constraint $\sum_{d=1}^D p_{id}=1$. For identifiability, we set the $D$-th word as pivot, and deal with the latent event scores 
\be
	\vtu_{(d)} = \vu_{(d)}-\vu_{(D)},~~ d=1,\ldots,D-1, \nn
\ee
and accordingly $\mtU=[\vtu_1 \ldots \vtu_K]$, where from \eqref{eq:mult_prior}
\be
	\label{eq:mult_prior1}
	\vtu_k^{D-1 \times 1} \sim \cN(\vtmu_k, \mtSig_k).
\ee

\subsubsection{E-Step}

We seek the posterior pdf $\Pro\left( \{\vtu_k\} | \{\vh_i\},\theta \right)$ where $\theta=\{\vtmu_k,\mtSig_k,\vc_i\}$. From \eqref{eq:mult_like} and \eqref{eq:mult_prior1},
\begin{align}
	\label{eq:mult_post}
	\Pro\left( \{\vtu_k\}, \{\vh_i\} | \theta \right) &= \Pro\left( \{\vh_i\} | \{\vtu_k\},\theta \right) \Pro\left( \{\vtu_k\} | \theta \right) \nn\\
	&= \prod_{i=1}^P \frac{M_i!}{h_{i1}! \cdots h_{iD}!} \prod_{d=1}^D \exp\left( h_{id} \left[ \eta_{id}-\lse(\veta_i) \right] \right) \nn\\
	&\quad\quad \prod_{k=1}^K \frac{\exp\left( -\frac{1}{2} (\vtu_k-\vtmu_k)^T \mtSig_k^{-1} (\vtu_k-\vtmu_k) \right)}{(2\pi)^{(D-1)/2} ~|\mtSig_k|^{1/2}},
\end{align}
where $\eta_{id}=\vtu_{(d)}\vc_i,~d=1,\ldots,D-1,~\eta_{iD}=0, ~\veta_i=[\eta_{i1} \ldots \eta_{iD-1}]^T=\mtU\vc_i$, and the log-sum-exp function
\be
	\label{eq:lse}
	\lse(\veta_i)=\log\left( 1+\sum_{d=1}^{D-1} \exp\left( \eta_{id} \right) \right).
\ee
As in the vMF case \eqref{eq:vmf_E}, the normalization factor in \eqref{eq:mult_post}, which is the lse function, prevents a tractable form. Following \cite{Khan10} we use a quadratic upper bound of the lse function based on the Taylor series expansion to obtain a lower bound for the complete-data likelihood, given in \eqref{eq:mult_post}. The second order Taylor series expansion around a fixed point $\vpsi_i$ is given by
\be
	\lse(\veta_i) = \lse(\vpsi_i) + (\veta_i-\vpsi_i)^T \nabla\lse(\vpsi_i) + \frac{1}{2} (\veta_i-\vpsi_i)^T \nabla^2\lse(\vtpsi_i)(\veta_i-\vpsi_i), \nn
\ee
where $\tilde{\vpsi}_i \in (\veta_i,\vpsi_i)$. From \eqref{eq:lse}, 
\begin{align}
	\nabla\lse(\vpsi_i) &= \left[ \frac{\exp\left( \psi_{i1} \right)}{1+\sum_{d=1}^{D-1} \exp\left( \psi_{id} \right)} \cdots \frac{\exp\left( \psi_{iD-1} \right)}{1+\sum_{d=1}^{D-1} \exp\left( \psi_{id} \right)} \right] = \vp_{\vpsi_i} \nn\\
	\nabla\lse^2(\vtpsi_i) &= \mLam_{\vtpsi_i} - \vp_{\vtpsi_i}\vp_{\vtpsi_i}^T, ~~ \mLam_{\vtpsi_i} = \diag(\vp_{\vtpsi_i}), \nn
\end{align}
where $\mLam_{\vtpsi_i}$ is the diagonal matrix form of $\vp_{\vtpsi_i}$. In \cite{Bohning92}, it is shown that the matrix
\be
	\label{eq:A_def}
	\mA = \frac{1}{2} \left( \mI_{D-1}-\frac{\vone_{D-1}\vone_{D-1}^T}{D} \right) \succcurlyeq \nabla\lse^2(\vtpsi_i), ~~\forall \vtpsi_i, 
\ee
in the positive semi-definite sense, where $\mI_d$ is the $d$-dimensional identity matrix. That is,
\begin{align}
	\label{eq:lse_ub}
	\lse(\veta_i) &\le \frac{1}{2} \veta_i^T \mA \veta_i + \vg_{\vtpsi_i}^T \veta_i + c_{\vtpsi_i}, \\
	\vg_{\vtpsi_i} &= \vp_{\vpsi_i} - \mA\vpsi_i, \nn\\
	c_{\vtpsi_i} &= \lse(\vpsi_i) + \frac{1}{2} \vpsi_i^T \mA \vpsi_i - \vpsi_i^T \vp_{\vpsi_i}. \nn
\end{align}

In \eqref{eq:mult_post}, replacing $\lse(\veta_i)$ with the quadratic upper bound in \eqref{eq:lse_ub} we get the following lower bound for the likelihood $\Pro\left( \{\vh_i\} | \{\vtu_k\},\theta \right)$
\begin{align}
	\Pro\left( \{\vh_i\} | \{\vtu_k\},\theta \right) &\ge \prod_{i=1}^P \frac{M_i!}{h_{i1}! \cdots h_{iD}!} \nn\\ 
	&\quad\quad \exp\left( \sum_{d=1}^{D} h_{id} \eta_{id} - \left( \frac{1}{2} \veta_i^T \mA \veta_i + \vg_{\vtpsi_i}^T \veta_i + c_{\vtpsi_i} \right) \sum_{d=1}^D h_{id} \right) \nn\\
	&= \prod_{i=1}^P \frac{M_i!}{h_{i1}! \cdots h_{iD}!} \nn\\ 
	&\quad\quad \exp\left( - \frac{1}{2} \left( \veta_i^T M_i\mA \veta_i - 2M_i\left( \frac{\vh_{i\backslash D}}{M_i} - \vg_{\vtpsi_i} \right)^T \veta_i + 2M_i c_{\vtpsi_i} \right) \right), \nn	
\end{align}
where $\vh_{i\backslash D}=[h_{i1} \ldots h_{iD-1}]^T$ is the count vector of the $i$-th hashtag for the first $D-1$ words. Defining a new observation vector
\be
	\vth_i=\mA^{-1}\left( \frac{\vh_{i\backslash D}}{M_i} - \vg_{\vtpsi_i} \right) = \mA^{-1}\left( \frac{\vh_{i\backslash D}}{M_i} - \vp_{\vpsi_i} \right) + \vpsi_i \nn
\ee
we write
\begin{align}
	\label{eq:mult_mix}
	\Pro\left( \{\vth_i\} | \{\vtu_k\},\theta \right) &\ge \prod_{i=1}^P f_{\vtpsi_i} \exp\left( - \frac{1}{2} \left( \veta_i - \vth_i \right)^T M_i\mA \left( \veta_i - \vth_i \right) \right), \\
	f_{\vtpsi_i} &= \frac{M_i!}{h_{i1}! \cdots h_{iD}!} ~\exp\left( \frac{\vth_i^T M_i\mA \vth_i}{2} - M_i c_{\vtpsi_i} \right). \nn
\end{align}

Recall that $\veta_i=\mtU\vc_i=\sum_{k=1}^K c_{ik}\vtu_k$. In \eqref{eq:mult_mix}, the latent variable vectors $\{\vtu_k\}$, which are independently modeled {\it a priori} \eqref{eq:mult_prior1}, are coupled, thus no more independent {\it a posteriori} in $\Pro\left( \{\vtu_k\} | \{\vth_i\},\theta \right)$. To capture the dependency we treat them in a single vector $\vtu = [\vtu_{(1)} \ldots \vtu_{(K)}]^T$. Without loss of generality, {\it a priori} we assume independence among the words for the same event, i.e., $\mtSig_k=\mI_{D-1}, ~\forall k$. Prior probability distribution reflects our initial belief about the unknown entity; and {\it a priori} we do not know anything about the inter-word dependencies of the hidden events. Hence, this is a quite reasonable assumption. In any case (under any prior assumption), we learn the posterior distribution for $\vtu$. For the same reason, without loss of generality, we also assume $\vtmu_k=\vzero_{D-1}, ~\forall k$, i.e., 
\be
	\label{eq:mult_prior2}
	\vtu \sim \cN(\vzero_{K(D-1)}, \mI_{K(D-1)}).
\ee
To rewrite \eqref{eq:mult_mix} in terms of $\vtu$ we note that $\veta_i=\mtC_i^T \vtu$ where 
\be
	\label{eq:C_def}
	\mtC_i= \mI_{D-1} \otimes \vc_i,
\ee
and $\otimes$ denotes the Kronecker product. 

Then, from \eqref{eq:mult_mix} and \eqref{eq:mult_prior2}, we approximate the complete-data likelihood with the following lower bound
\begin{align}
	\label{eq:mult_lb}
	\Pro\left( \{\vth_i\}, \vtu | \theta \right) &\ge Q_m(\vtu,\{\vc_i\}) \nn\\
	&= \prod_{i=1}^P \frac{f_{\vtpsi_i}}{(2\pi)^{K(D-1)/2} } \nn\\
	&\quad\quad \exp\left( - \frac{1}{2} \left[ \left( \mtC_i^T\vtu - \vth_i \right)^T M_i\mA \left( \mtC_i^T\vtu - \vth_i \right) + \vtu^T \vtu \right] \right) \nn\\
	&= \prod_{i=1}^P f_{\vtpsi_i} \exp\left( \left[\vphi^T \mPhi^{-1} \vphi-\vth_i^T M_i\mA \vth_i \right] \big/ 2 \right) ~|\mPhi|^{1/2} \nn\\
	&\quad\quad \underbrace{\exp\left( - \frac{1}{2} (\vtu-\vphi)^T \mPhi^{-1} (\vtu-\vphi) \right) \Big/ (2\pi)^{K(D-1)/2} |\mPhi|^{1/2}}_{q_m(\vtu)},
\end{align}
where using \eqref{eq:C_def}
\begin{align}
	\label{eq:mult_mean}
	\vphi &= \mPhi \sum_{i=1}^P M_i \mtC_i \mA \vth_i = \mPhi \sum_{i=1}^P \left( M_i \mA \vth_i \right) \otimes \vc_i , \\
	\label{eq:mult_cov}
	\mPhi &= \left( \sum_{i=1}^P M_i \mtC_i \mA \mtC_i^T + \mI_{K(D-1)} \right)^{-1} = \left( \sum_{i=1}^P M_i \mA \otimes \vc_i \vc_i^T + \mI_{K(D-1)} \right)^{-1}.
\end{align}
Using the lower bound in \eqref{eq:mult_lb} we approximate the posterior $\Pro\left( \vtu | \{\vth_i\},\theta \right)$ with $q_m(\vtu)$, which is $\cN(\vphi,\mPhi)$. 

Note that $K(D-1)$ can be very large due to the dictionary size $D$. As a result, it is, in general, not practical to perform the matrix inversion in \eqref{eq:mult_cov}. From the Matrix Inversion Lemma, it can be shown that 
\begin{align}
	\label{eq:Phi_fin}
	\mPhi &= \mI_{D-1} \otimes \mF^{-1} - \vone_{D-1}\vone_{D-1}^T \otimes \mDel, \\
	\mF &= \frac{1}{2} \mC \mLam_{M_i} \mC^T + \mI_{K}, \nn\\
	\mDel &= \mF^{-1}\mC \mY \mC^T\mF^{-1} \nn\\
	\mY &= -\frac{\mLam_{M_i}}{2D} - \frac{\mLam_{M_i}}{2D} \mC^T \left( \frac{\mF}{D-1} - \mC\frac{\mLam_{M_i}}{2D}\mC^T \right)^{-1} \mC\frac{\mLam_{M_i}}{2D}, \nn
\end{align}
where $\mC=[\vc_1 \ldots \vc_P]$, and $\mLam_{M_i}$ is the diagonal matrix whose entries are $M_1,\ldots,M_P$.
Using \eqref{eq:Phi_fin} we efficiently compute $\mPhi$ by only inverting $K \times K$ matrices.
Since, typically, the number of events is selected a small number, the proposed algorithm is now feasible for big datasets with large $P$ and $D$ (see Theorem \ref{thm:comp} and Section \ref{sec:exp}).

We can similarly simplify the computation of $\vphi$, given in \eqref{eq:mult_mean}. Define
\be
	\vz_i = M_i \mA\vth_i, \nn
\ee
and partition the posterior mean $\vphi$ of the $K(D-1)$ event-word scores into $D-1$ vectors of size $K$
\be
	\label{eq:X_def}
	\vphi = [\vx_1^T \ldots \vx_{D-1}^T]^T, ~~ \mX = [\vx_1 \ldots \vx_{D-1}].
\ee
We can efficiently compute $\mX$, which is nothing but a reorganized version of $\vphi$, as
\begin{align}
	\label{eq:phi_matrix}
	\mX &= \mF^{-1}\mC\mZ - \mDel\mC\mtZ, \\
	\mZ &= [\vz_1 \ldots \vz_P]^T, \nn\\
	\mtZ &= \mZ ~\vone_{D-1} \vone_{D-1}^T. \nn
\end{align}

\subsubsection{M-step}

The mean and covariance of $\vtu$ are estimated using \eqref{eq:mult_mean} and \eqref{eq:mult_cov}. From \cite{Khan10}, the optimum value of $\vpsi_i$ is given by
\be
	\vpsi_i = \mtC_i^T \vphi = \mX^T \vc_i.
\ee
For the estimation of $\{\vc_i\}$, which is considered in the next section, we use the expected value of the lower bound to the complete-data log-likelihood, given in \eqref{eq:mult_lb},
\begin{align}
	\label{eq:mult_M}
	\Exp_{q_m(\vtu)}\left[ \log Q_m(\vtu,\{\vc_i\}) \right] &= -\frac{1}{2} \Exp_{q_m(\vtu)}\left[ \vtu^T \left( \sum_{i=1}^P M_i \mtC_i \mA \mtC_i^T + \mI_{K(D-1)} \right) \vtu \right] \nn\\
	&\quad\quad + \Exp_{q_m(\vtu)}\left[ \vtu \right]^T \left( \sum_{i=1}^P M_i \mtC_i \mA \vth_i \right) + \text{Const.} \nn\\
	&= -\frac{1}{2} \Tr\left[ \left( \sum_{i=1}^P M_i \mtC_i \mA \mtC_i^T + \mI_{K(D-1)} \right) \left( \mPhi+\vphi\vphi^T \right) \right] \nn\\
	&\quad\quad + \vphi^T \sum_{i=1}^P M_i \mtC_i \mA \vth_i + \text{Const.},
\end{align}
where $\Tr(\cdot)$ is the trace of a matrix, and the expectation is taken with respect to $q_m(\vtu)$ (see \eqref{eq:mult_lb}).
To compute the expectation of the quadratic term we use the fact that $\Exp\left[ \vtu^T\mX\vtu \right] = \Exp\left[ \Tr(\vtu^T\mX\vtu) \right] = \Exp\left[ \Tr(\mX\vtu\vtu^T) \right] = \Tr\left( \mX \Exp[\vtu\vtu^T] \right)$. 

\subsection{Mixture Coefficients}
\label{sec:coef}

From \eqref{eq:vmf_M} and \eqref{eq:mult_M}, we estimate the mixture coefficients of the $i$-th hashtag as
\begin{align}
	\label{eq:coeff}
	\hat{\vc}_i &= \arg\max_{\vc_i} ~\Exp_{q_v(\vv_k)} \left[ \log Q_v(\{\vv_k\}, \{\vc_i\}) \right] + \Exp_{q_m(\vtu)}\left[ \log Q_m(\vtu,\{\vc_i\}) \right] \nn\\
	&= \left( \mB \frac{\vc_i}{\sum_{k=1}^K c_{ik}} \right)^T \kappa_i \sum_{n=1}^{N_i} \vw_{in} + \vphi^T M_i \mtC_i \mA \vth_i - \frac{1}{2} \Tr\left[ M_i \mtC_i \mA \mtC_i^T \left( \mPhi+\vphi\vphi^T \right) \right], 
\end{align}
where $\mB=[\vb_1 \ldots \vb_K]$ holds the posterior mean directions of the event geolocations (see \eqref{eq:vmf_E_mean}). 
From \eqref{eq:mult_mean},
\begin{align}
	\label{eq:phi_grad}
	\vphi^T M_i \mtC_i \mA \vth_i &= \vphi^T \left( \vz_i \otimes \vc_i \right) \nn\\
	&= \vc_i^T \mX \vz_i. 
\end{align}

Using the definitions of $\mA$ and $\mtC_i$, given in \eqref{eq:A_def} and \eqref{eq:C_def}, we write
\begin{align}
	M_i \mtC_i \mA \mtC_i^T &= \frac{M_i}{2} \mtC_i\mtC_i^T - \frac{M_i}{2D} (\mtC_i\vone_{D-1})(\mtC_i\vone_{D-1})^T \nn\\
	&= \mI_{D-1} \otimes \frac{M_i}{2} \vc_i \vc_i^T - \vone_{D-1}\vone_{D-1}^T \otimes \frac{M_i}{2D} \vc_i \vc_i^T. \nn
\end{align}
As a result, from \eqref{eq:Phi_fin},
\begin{align}
	\label{eq:Tr_Phi}
	\Tr\Big( M_i \mtC_i \mA \mtC_i^T \mPhi &\Big) = \Tr\left( \mI_{D-1} \otimes \frac{M_i}{2} \vc_i \vc_i^T\mF^{-1} - \vone_{D-1}\vone_{D-1}^T \otimes \frac{M_i}{2D} \vc_i \vc_i^T\mF^{-1} \right)  \nn\\
	& - \Tr\left( \vone_{D-1}\vone_{D-1}^T \otimes \frac{M_i}{2} \vc_i \vc_i^T\mDel - \vone_{D-1}\vone_{D-1}^T \otimes \frac{M_i(D-1)}{2D} \vc_i \vc_i^T\mDel \right) \nn\\
	&= \vc_i^T \left[ \frac{M_i(D-1)^2}{2D} \mF^{-1} - \frac{M_i(D-1)}{2D} \mDel \right] \vc_i.
\end{align}
Similarly, using \eqref{eq:X_def} we write
\begin{align}
	\label{eq:Tr_phi}
	\Tr\Big( M_i \mtC_i \mA \mtC_i^T \vphi\vphi^T &\Big) = \sum_{d=1}^{D-1} \Tr\left( \frac{M_i}{2} \vc_i \vc_i^T \vx_d\vx_d^T \right) - \sum_{d=1}^{D-1} \sum_{j=1}^{D-1} \Tr\left( \frac{M_i}{2D} \vc_i \vc_i^T \vx_d\vx_j^T \right)  \nn\\
	&= \vc_i^T \left[ \frac{M_i}{2} \sum_{d=1}^{D-1} \vx_d\vx_d^T - \frac{M_i}{2D} \sum_{d=1}^{D-1} \sum_{j=1}^{D-1} \vx_d\vx_j^T \right] \vc_i.
\end{align}
Substituting \eqref{eq:phi_grad}, \eqref{eq:Tr_Phi} and \eqref{eq:Tr_phi} in \eqref{eq:coeff} we have the following quadratic program
\begin{align}
	\label{eq:coeff_final}
	\hat{\vc}_i &= \arg\max_{\vc_i} ~ -\frac{1}{2} \vc_i^T \mGa_i \vc_i + \vc_i^T \vga_i \nn\\ 
	&\quad\quad \text{subject to}~~ c_{ik} \ge 0, ~k=1,\ldots,K, \\
	\mGa_i &= \frac{M_i(D-1)^2}{2D} \mF^{-1} - \frac{M_i(D-1)}{2D} \mDel + \frac{M_i}{2} \sum_{d=1}^{D-1} \vx_d\vx_d^T - \frac{M_i}{2D} \sum_{d=1}^{D-1} \sum_{j=1}^{D-1} \vx_d\vx_j^T \nn\\
	\vga_i &= \mB^T \kappa_i \sum_{n=1}^{N_i} \vw_{in} + \mX \vz_i, \nn
\end{align}
which can be efficiently solved using the interior point method. 

\begin{algorithm}[htb]
\caption{The proposed EM algorithm}
\label{alg:em}
\baselineskip=0.5cm
  \begin{algorithmic}[1]
    \STATE Input $\{ \bar{\vw}_i^{3 \times 1}=\sum_{n=1}^{N_i} \vw_{in}, \vh_i^{D \times 1} \}, ~i=1,\ldots,P$
    \STATE Initialize $\{\vc_i^{K \times 1},\kappa_i,s_k,\beta_k^{3 \times 1}\}, ~k=1,\ldots,K$    
    \WHILE{not converged}
    \STATE Compute posterior parameters $\{\vb_k, r_k, \tau_i\}$ for vMF as in \eqref{eq:vmf_E_mean}, \eqref{eq:vmf_E_conc}, \eqref{eq:tau}
    \STATE Update vMF parameters: \\
    $\kappa_i = \max\left\{ 0, \frac{3\tau_i-\tau_i^3}{1-\tau_i^2} \right\}, ~ s_k = \max\left\{ 0, \frac{3\vbe_k^T \vb_k-(\vbe_k^T \vb_k)^3}{1-(\vbe_k^T \vb_k)^2} \right\}, ~\vbe_k = \vb_k$
    \STATE Compute posterior parameters $\{\mPhi,\mX\}$ for multinomial as in \eqref{eq:Phi_fin}, \eqref{eq:phi_matrix}
    \STATE Update multinomial parameter $\vpsi_i = \mX^T \vc_i$
    \STATE Update mixture coefficients $\{\vc_i\}$ by solving \eqref{eq:coeff_final}
    \ENDWHILE
  \end{algorithmic}
\end{algorithm}

The resulting algorithm is summarized as Algorithm \ref{alg:em}.

{\subsection{Computational Complexity}
In the following theorem, we show that the computational complexity of Algorithm \ref{alg:em} scales linearly with each dimension of the problem. As a result, the proposed algorithm can be efficiently used for large datasets, as demonstrated in the following section.
\begin{theorem}
\label{thm:comp}
	At each iteration of the proposed EM algorithm, given by Algorithm \ref{alg:em}, the computational complexity linearly scales with the number of factors $K$, the number of hashtags 		$P$, and the number of words in the dictionary $D$, i.e., $O(KPD)$.	
\end{theorem}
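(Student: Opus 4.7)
The plan is to verify that each update inside the while-loop of Algorithm \ref{alg:em} admits an implementation whose cost is at most $O(KPD)$, so that a single iteration, being a constant-length sequence of such updates, costs $O(KPD)$. Throughout I will assume (as is standard and as the statement tacitly requires) that $K \ll \min(P,D)$; the aggregate statistics $\bar{\vw}_i=\sum_{n=1}^{N_i}\vw_{in}$ and the word counts $\vh_i$ are treated as inputs, absorbing the tweet-level data cost into preprocessing.

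First I would handle the vMF block. In \eqref{eq:vmf_E_mean}--\eqref{eq:vmf_E_conc} each $\vb_k$ and $r_k$ is a weighted sum of $P$ three-dimensional vectors with weights $c_{ik}/\sum_{k'}c_{ik'}$ that can be precomputed in $O(KP)$; the aggregate cost is therefore $O(KP)$. The scalars $\tau_i$ in \eqref{eq:tau} are likewise $O(KP)$ in total, and the closed-form updates in Line 5 cost $O(P+K)$.

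Second, and most delicately, I would attack the multinomial E-step, which is the bottleneck if done naively since $\mPhi$ is of size $K(D-1)\times K(D-1)$. The crux is the decomposition \eqref{eq:Phi_fin}, obtained from the Matrix Inversion Lemma, which expresses $\mPhi$ implicitly in terms of the $K\times K$ matrices $\mF$, $\mDel$, and $\mY$; forming $\mC\mLam_{M_i}\mC^T$ costs $O(K^2P)$ and the ensuing $K\times K$ inversions cost $O(K^3)$. The posterior mean $\vphi$ need never be written out in full length $K(D-1)$; its reshape $\mX$ is obtained via \eqref{eq:phi_matrix}, where $\mC\mZ$ costs $O(KPD)$, the product $\mF^{-1}(\mC\mZ)$ costs $O(K^2D)$, and $\mC\mtZ$ collapses to $\mC(\mZ\vone_{D-1})\vone_{D-1}^T$ costing $O(PD+KP+KD)$ by exploiting the rank-one column structure of $\mtZ$. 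The multinomial E-step thus runs in $O(KPD+K^2P+K^3)=O(KPD)$. The subsequent update $\vpsi_i=\mX^T\vc_i$ costs $O(KD)$ per hashtag and $O(KPD)$ overall.

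Finally, the coordinating M-step for $\{\vc_i\}$ consists of $P$ independent nonnegative quadratic programs \eqref{eq:coeff_final} in $K$ variables. The shared pieces $\mX\mX^T$ and $(\mX\vone_{D-1})(\mX\vone_{D-1})^T$ can be precomputed in $O(K^2D)$ and $O(KD)$ respectively; the per-hashtag pieces $M_i\mF^{-1}$ and $M_i\mDel$ cost $O(K^2)$; and assembling $\vga_i$ costs $O(KD)$ once one notes that $\mA=\tfrac12(\mI-\vone\vone^T/D)$ and its inverse act on any vector in $O(D)$ by Sherman--Morrison structure. Solving each QP by an interior point method takes time polynomial in $K$ and independent of $P$ and $D$, so the M-step totals $O(KPD)$. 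The main obstacle in the write-up is the bookkeeping in the multinomial E-step: one has to consistently exploit the Kronecker product $\mtC_i=\mI_{D-1}\otimes\vc_i$, the rank-one corrections in $\mA$ and $\mtZ$, and the block structure of $\mPhi$ in \eqref{eq:Phi_fin} so that no intermediate quantity of size $K(D-1)$ or $(D-1)^2$ is ever stored. Once this is done carefully, summing per-step costs yields the claimed $O(KPD)$ scaling per iteration.
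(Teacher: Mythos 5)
Your proposal is correct and follows essentially the same route as the paper's proof: a step-by-step cost accounting of each line of Algorithm \ref{alg:em}, relying on the Matrix Inversion Lemma decomposition \eqref{eq:Phi_fin} to avoid the $K(D-1)\times K(D-1)$ inversion, identifying $\mC\mZ$ and the assembly of the $\vga_i$ as the $O(KPD)$ bottlenecks, exploiting the rank-one structure of $\sum_{d}\sum_{j}\vx_d\vx_j^T$, and treating the per-hashtag QP as constant-cost in $P$ and $D$. Your bookkeeping is if anything slightly more explicit than the paper's (e.g., the rank-one collapse of $\mC\mtZ$ and the $O(KPD)$ total for the $\vpsi_i$ updates), but there is no substantive difference in approach.
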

\begin{proof}
First of all, note that typically $K \ll P \ll D$.
We start with the vMF E-step (line 4 in Algorithm \ref{alg:em}). The most expensive computation in the vMF E-step is the posterior mean direction $\vb_k$, given by \eqref{eq:vmf_E_mean}. Note that the sum of geolocation vectors $\sum_{n=1}^{N_i} \vw_{in}$ is computed offline once for each hashtag $i$; hence the number of geotagged tweets $N_i$ does not contribute to the computational complexity. Each $\vb_k$ has a computational complexity of $O(P)$. As a result, the computational complexity for the vMF E-Step is $O(KP)$. 
There is no expensive computation in the vMF M-step (line 5 in Algorithm \ref{alg:em}). \newline \indent
In the multinomial E-step (line 6 in Algorithm \ref{alg:em}), the computational complexity of the posterior covariance $\mPhi$ is $O(K^2P)$ due to the computation of $\mF$ and $\mDel$ (see \eqref{eq:Phi_fin}). The computation of the posterior mean $\mX$ in \eqref{eq:phi_matrix} has a complexity of $O(KPD)$ due to the multiplication of the matrices $\mC^{K\times P}$ and $\mZ^{P\times D}$. The computational complexity of the multinomial M-step (line 7 in Algorithm \ref{alg:em}) is $O(KD)$.\newline \indent
Finally, the complexity for updating the coefficients $\{\vc_i\}$ (line 8 in Algorithm \ref{alg:em}) is $O(KPD)$ since for each $\vga_i$, given by \eqref{eq:coeff_final}, the complexity is $O(KD)$ due to the multiplication $\mX\vz_i$. In \eqref{eq:coeff_final}, the matrices $\{\mGa_i\}$ entail the complexity of $O(K^2D)$ due to the computation of $\sum_{d=1}^{D-1} \vx_d\vx_d^T = \mX\mX^T$, which is computed once and used for each $\mGa_i$. Note that $\sum_{d=1}^{D-1} \sum_{j=1}^{D-1} \vx_d\vx_j^T = \left(\sum_{d=1}^{D-1} \vx_d \right) \left(\sum_{d=1}^{D-1} \vx_d \right)^T$ is nothing but the outer product of the sum vector $\sum_{d=1}^{D-1} \vx_d$, requiring $O(KD)$ computations for the sum and $O(K^2)$ computations for the outer product. In solving the constrained quadratic program given in \eqref{eq:coeff_final} for each coefficient vector $\vc_i$, the number of iterations, in practice, is bounded by a constant; and in each iteration linear algebra operations in the $K$-dimensional space are performed. Hence, the overall complexity does not exceed $O(KPD)$. Note also that each $\vc_i$ can be updated in parallel. 
\end{proof}
}

\section{Experiments}
\label{sec:exp}

We have tested the proposed algorithm on a Twitter dataset from August 2014 obtained from the Twitter stream API at gardenhose level access. It spans the whole month, and includes a random sample of 10 \% of all tweets from all over the world. We consider about 30 million geotagged tweets, among which around 3 million use approximately 1 million unique hashtags. We have organized the data in terms of hashtags. That is, each unique hashtag is an instance with bag-of-words and geolocation features. The rarely used hashtags and the hashtags with small geographical distribution are filtered out, leaving us with 13000 hashtags ($P=13000$), and a dictionary of $67000$ significant words ($D=67000$). The number of geotags, $N_i$, for hashtags varies from $2$ to $71658$; and the number of words, $M_i$, varies from $10$ to $426892$.
%The number of events, $K$, is selected using a recursive adaptive procedure. We start with a large number (e.g., $K=40$), and after a certain number of iterations, automatically remove the uninformative events that are not associated with any hashtag, i.e., $\{c_{ik}\}_i$ are all small for a given $k$. 

We run the algorithm in a hierarchical manner with a small number of events in each round (e.g., $K=10$). In each round, the hashtags that localize well in an event with a dominant mixture coefficient are pruned, and the remaining hashtags are further processed in the next round. In other words, in the next round, we zoom into the previously unexplored sections of the data to discover new events. We also zoom into the broadly discovered events to find specific events. For example, in the first round, we have discovered popular events such as the Ice Bucket Challenge and the Robin Williams' death, and also generic events for the British and Asian hashtags (top figure in Fig. \ref{fig:factors}). In the following round, separately processing the generic events and the non-localized data we have identified further specific events such as the Ferguson unrest and the USA national basketball team for the FIBA world championship (bottom figure in Fig. \ref{fig:factors}). Specifically, we have identified an Indian event about a Hindu religious leader in jail, and a British event about the Commonwealth Games in the generic Asian and British events, respectively. In Fig. \ref{fig:factors}, it is seen also that the previously found Ice Bucket Challenge event has decomposed into a local event and a global event in the second round. It is seen in Fig. \ref{fig:factors} that the proposed algorithm successfully finds interesting events in terms of both topic and geolocation. The geographical distribution of the tweets that use hashtags associated with the events about the Commonwealth Games and the Hindu religious leader are depicted in Fig. \ref{fig:glasgow}. Similarly, Fig. \ref{fig:robin} illustrates the geographical distributions of the tweets that use hashtags about the death of Robin Williams. The geolocations of the tweets that are shown in Fig. \ref{fig:glasgow} and Fig. \ref{fig:robin} are consistent with the corresponding events. As expected, the tweets that mention Robin Williams are well distributed around the world with a center in the USA, whereas the tweets about the Commonwealth Games are sent only from the Commonwealth countries, and the tweets about the Hindu leader are only from India.

\begin{figure}[thb]
\includegraphics[width=\textwidth]{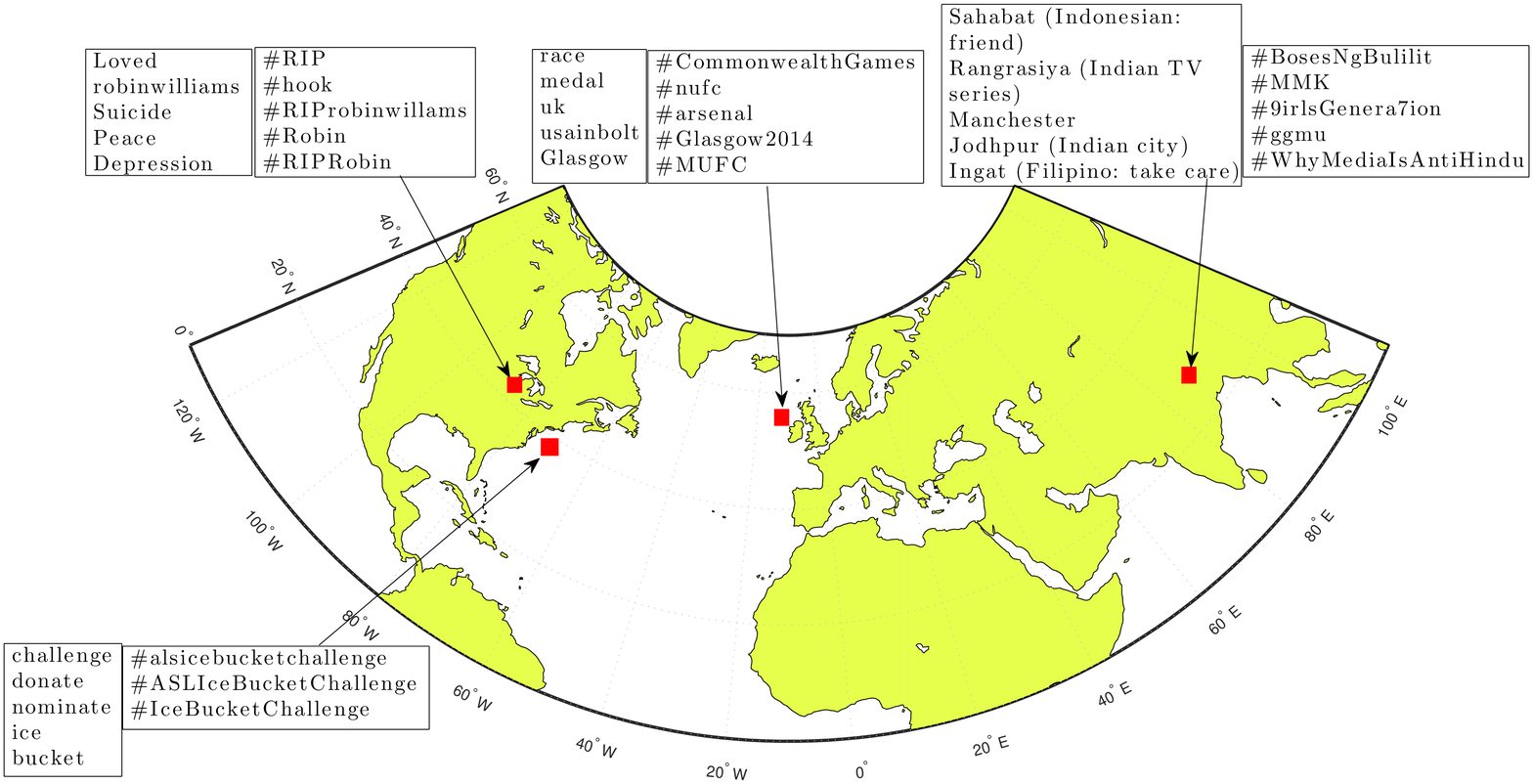}
\includegraphics[width=\textwidth]{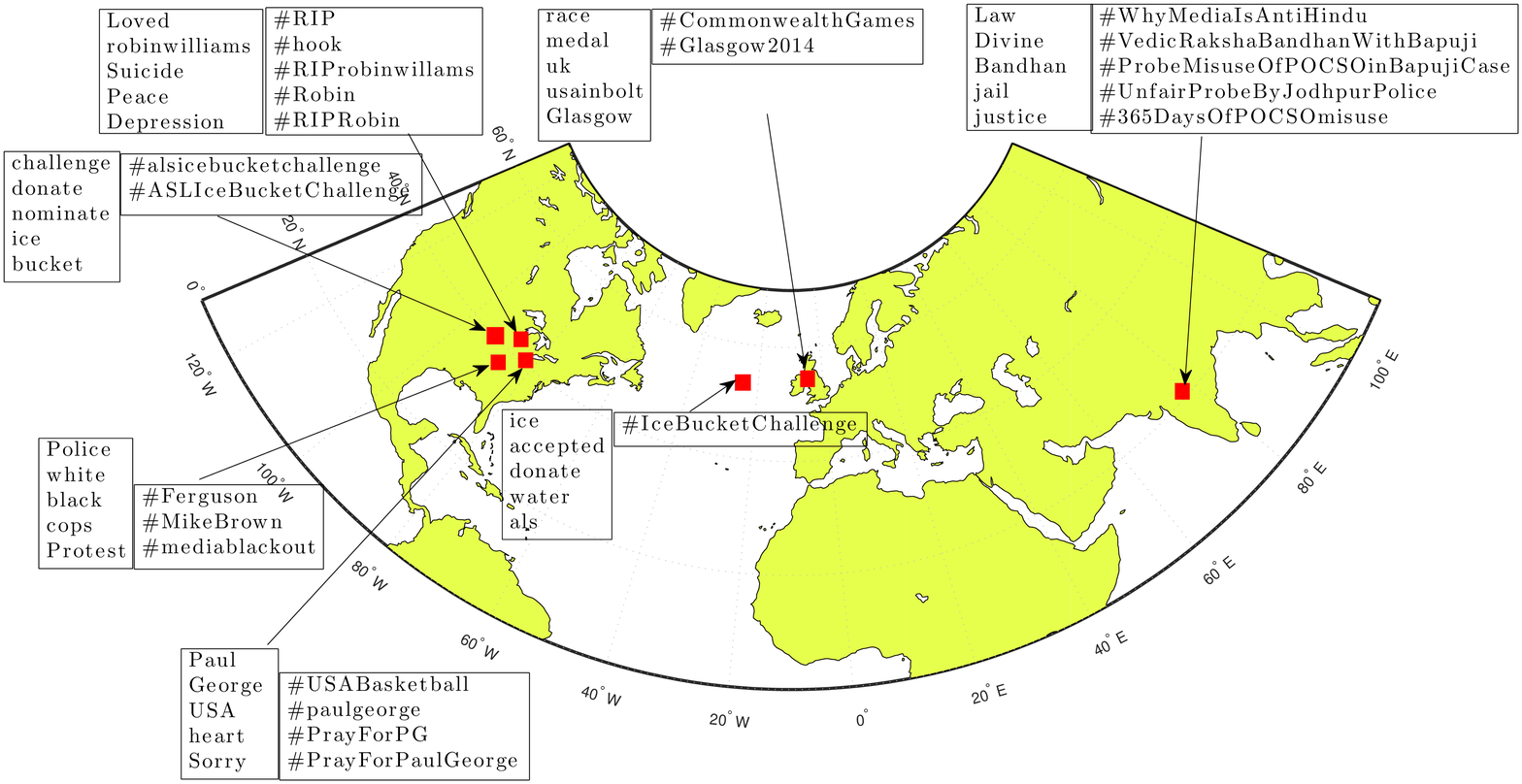}
\caption{Some events discovered in the first round of the algorithm (top). Some specific events discovered after two rounds (bottom). Dominant hashtags and words used for the events, as well as their mean geolocations are displayed.}
\label{fig:factors}       % Give a unique label
\end{figure}

\begin{figure}[thb]
\includegraphics[width=\textwidth]{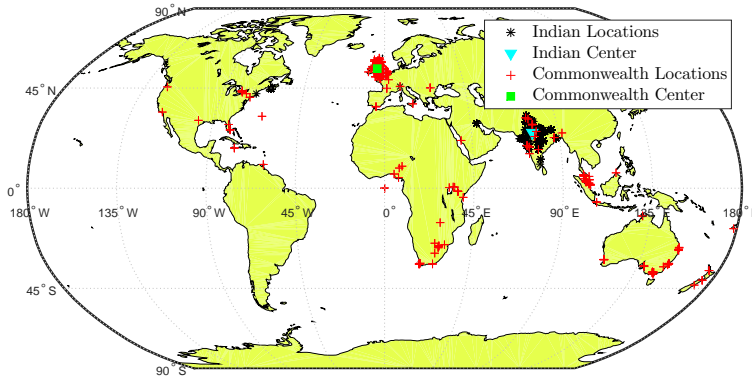}
\caption{Geographical distribution of the tweets the algorithm associates with the Commonwealth Games and the Hindu religious leader in jail.}
\label{fig:glasgow}       % Give a unique label
\end{figure}

\begin{figure}[thb]
\includegraphics[width=\textwidth]{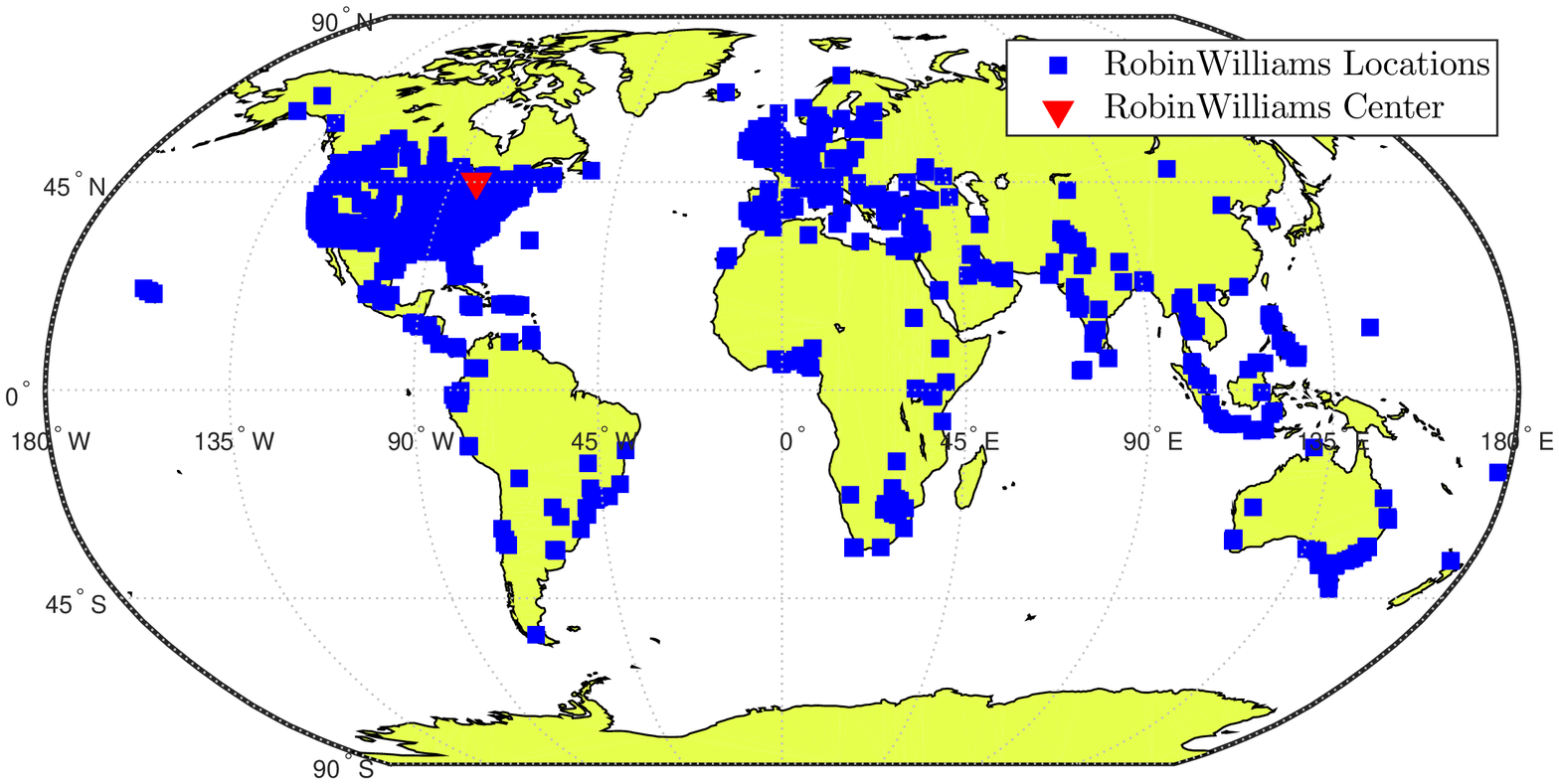}
\caption{Geographical distribution of the tweets the algorithm associates with the death of Robin Williams.}
\label{fig:robin}       % Give a unique label
\end{figure}

Next, as an application, we cluster the hashtags based on the mixture coefficients $\vc_i$. 
A sample result using k-means and multidimensional scaling (MDS) is shown in Fig. \ref{fig:cluster}. 
{For this example, we have chosen a small subset of the dataset with $314$ hashtags and annotated each hashtag with an event name, which resulted in $7$ events as shown in Fig. \ref{fig:cluster}. This hand annotation provides us a ground truth (color and shape coded by markers), which we compare with the clustering result from the proposed algorithm (shown by ellipses). Running the algorithm with $K=7$ we get a $7$-dimensional coefficient vector $\vc_i$ for each hashtag $i$ \footnote{The issue of unknown number of events can be handled using the silhouette values.}. We then cluster the vectors $\{\vc_i\}$ into $7$ groups using the k-means technique. The rand index and the adjusted rand index between the clustering result and the ground truth are $0.9807$ and $0.9539$, respectively. The rand index is a similarity measure between two clusterings. It takes a value between $0$ and $1$, where $0$ indicates no agreement, and $1$ indicates perfect agreement. Hence, from the rand index result and also Fig. \ref{fig:cluster}, it is seen that the proposed algorithm can be used to effectively cluster multimodal big datasets.}

\begin{figure}[thb]
\includegraphics[width=\textwidth]{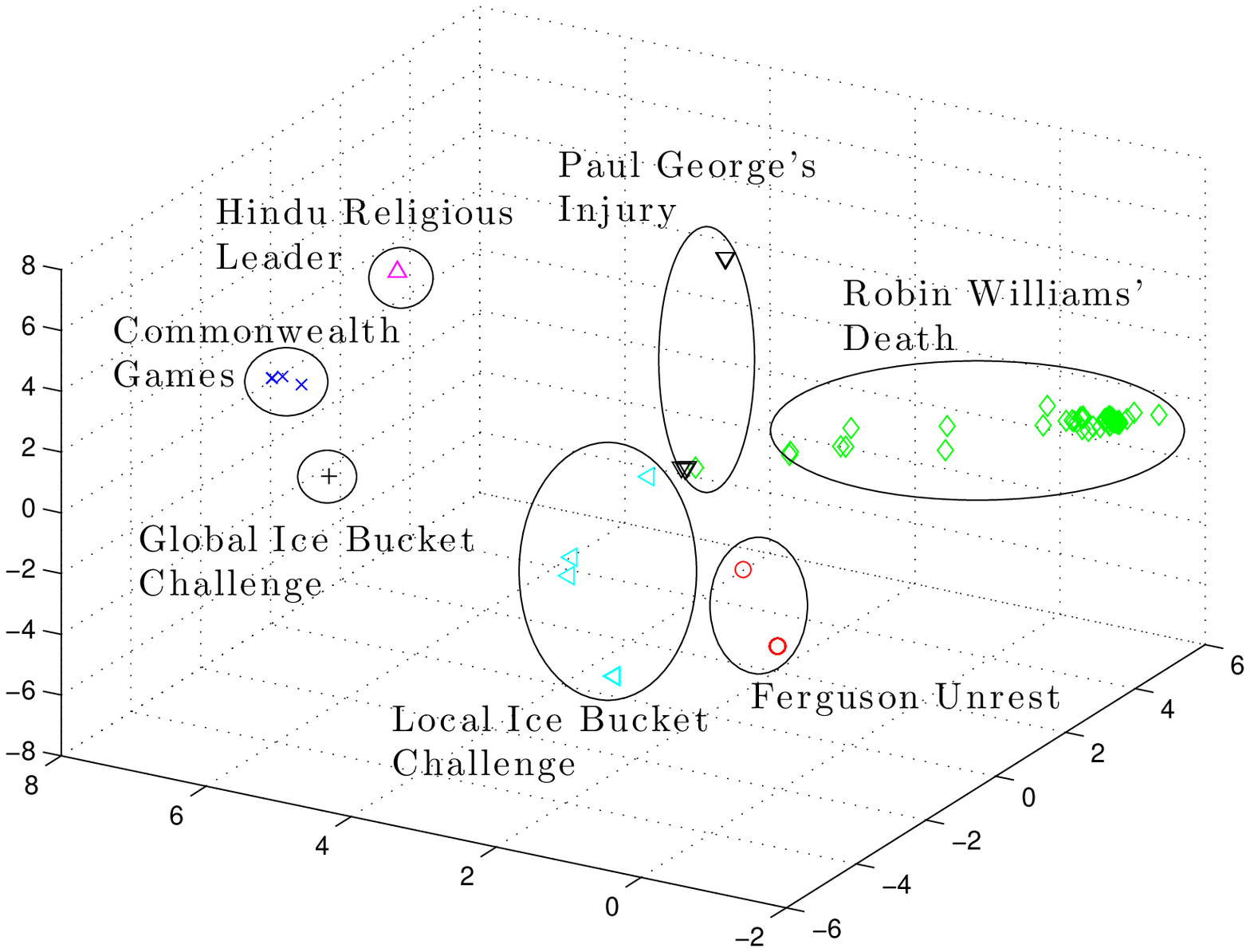}
\caption{Clustering via k-means and MDS based on the mixture coefficients. The ground truth is shown with the color and shape of the markers; and the clustering result is shown with the ellipses.}
\label{fig:cluster}       % Give a unique label
\end{figure}

{Finally, for the same subset used in Fig. \ref{fig:cluster}, we measure, in Fig. \ref{fig:good}, how well our model fits the data by comparing the likelihood values of the geolocation data under the vMF model given by the proposed algorithm with the likelihood values under the individual vMF models separately fitted for each hashtag. The individual vMF models for each hashtag provide us a baseline for comparison since this is what one would typically do to fit the geolocation data for each hashtag without the goal of event detection. Fig. \ref{fig:good} shows that the proposed event detection algorithm fits the geolocation data as well as the baseline data fitting technique.}

\begin{figure}[thb]
\includegraphics[width=\textwidth]{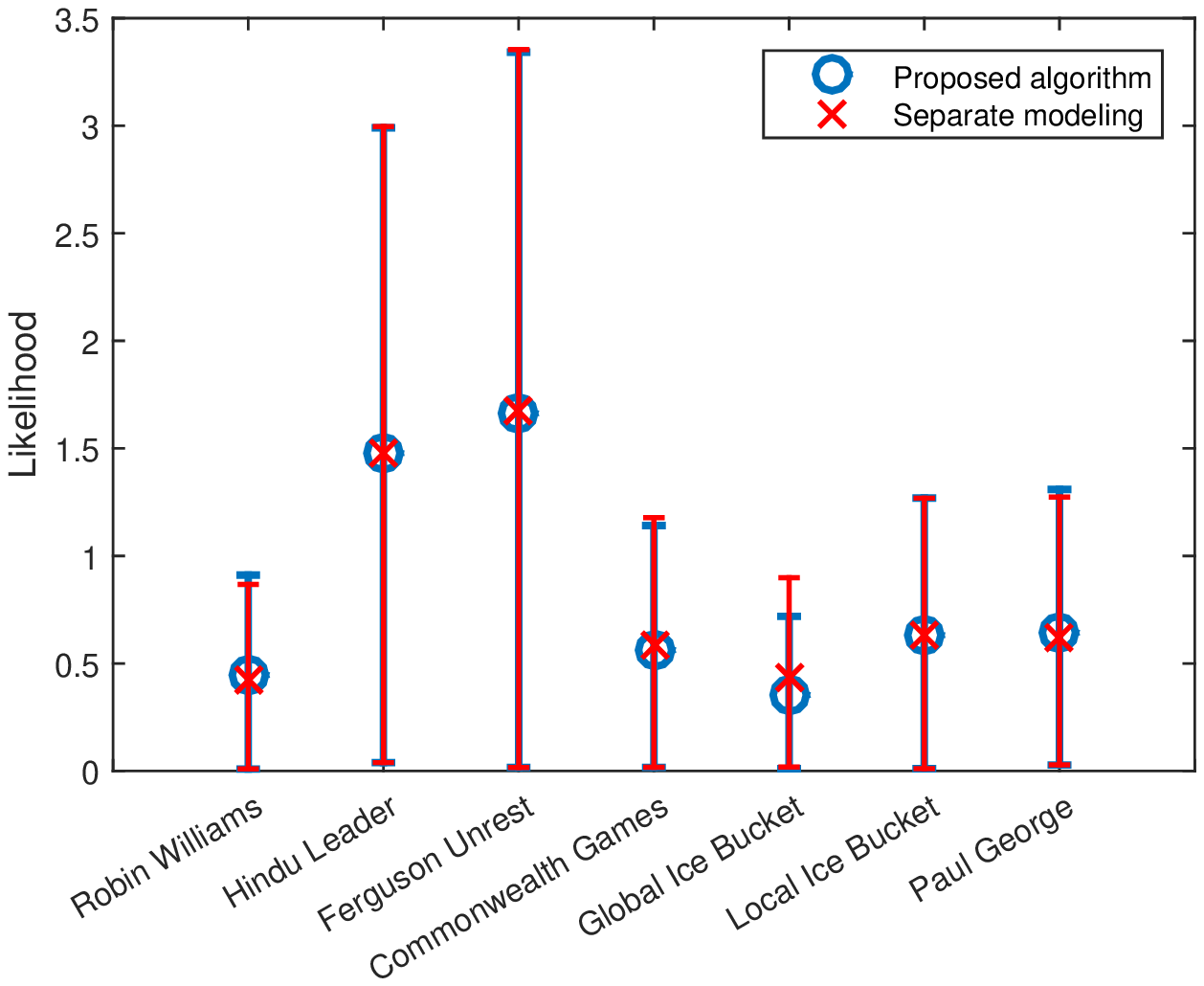}
\caption{Goodness of fit plot for the proposed event detection algorithm and the individual vMF models separately fitted to the geolocation data for each hashtag. Note that the separate modeling does not perform event detection; it is the baseline technique for data fitting. It is seen that the proposed algorithm not only detects events (see the previous figures), but also fits the geolocation data as well as the baseline technique. For each event, the mean likelihood values are shown with markers (circle and cross), and the $95\%$ confidence intervals are shown with bars.}
\label{fig:good}       % Give a unique label
\end{figure}

\section{Conclusion}
\label{sec:conc}

We have treated the event detection problem in a multimodal Twitter hashtag network. Utilizing the bag-of-words and the geotags from related tweets as the features for hashtags we have developed a variational EM algorithm to detect events according to a generative model. The computational complexity of the proposed algorithm has been simplified such that it is viable for big datasets. A hierarchical version of the proposed algorithm has been tested on a Twitter dataset with 13000 hashtags from August 2014. By pruning data in each round multi-resolution events (higher in each round) have been learned. Significant events, such as Robin Williams' death, and the Ice Bucket Challenge, as well as some generic events, such as the British and the Asian hashtags, have been learned in the first round. Later in the second round, new specific events have been discovered within the generic events. We have also successfully clustered a set of hashtags using the detected events. {In addition to event detection, we have shown that the proposed algorithm fits the geolocation data as well as the baseline data fitting technique which separately models each hashtag. The proposed algorithm is justified by the remarkable clustering and goodness of fit results, and the low computational complexity that linearly scales with the number of events, number of hashtags, number of tweets, and number of words in the dictionary.}
%The number of events has been automatically set by removing the uninformative events that are not associated with any hashtag after a certain number of iterations. 

%\begin{acknowledgements}
%If you'd like to thank anyone, place your comments here
%and remove the percent signs.
%\end{acknowledgements}

% BibTeX users please use one of
%\bibliographystyle{spbasic}      % basic style, author-year citations
%\bibliographystyle{spmpsci}      % mathematics and physical sciences
%\bibliographystyle{spphys}       % APS-like style for physics
%\bibliography{}   % name your BibTeX data base

\begin{thebibliography}{}
%
% and use \bibitem to create references. Consult the Instructions
% for authors for reference list style.
%

\bibitem{Stats}
http://www.socialbakers.com/statistics/twitter/

\bibitem{Farzindar15}
Farzindar, A., \& Khreich, W. (2015). A Survey of Techniques for Event Detection in Twitter. {\it Computational Intelligence}, 31(1), 132--164.

\bibitem{Liu12}
Liu, K.L., Li, W., \& Guo, M. (2012). Emoticon Smoothed Language Models for Twitter Sentiment Analysis. {\it AAAI Conference on Artificial Intelligence}.

\bibitem{Amer12}
Amer-Yahia, S., Anjum, S., Ghenai, A., Siddique, A., Abbar, S., Madden, S., Marcus, A., \& El-Haddad, M. (2012). MAQSA: A System for Social Analytics on News. {\it ACM SIGMOD International Conference on Management of Data}.

\bibitem{Zhao15}
Zhao, Z., Resnick, P., \& Mei, Q. (2015). Enquiring Minds: Early Detection of Rumors in Social Media from Enquiry Posts. {\it International World Wide Web Conference}.

\bibitem{Brandon15}
Oselio, B., Kulesza, A., \& Hero, A. (2015). Information Extraction from Large Multi-Layer Social Networks. {\it IEEE International Conference on Acoustics, Speech, and Signal Processing}. 

\bibitem{Tumasjan10}
Tumasjan, A., Sprenger, T.O., Sandner, P.G., \& Welpe, I.M. (2012). Predicting Elections with Twitter: What 140 Characters Reveal about Political Sentiment. {\it International Conference on Weblogs and Social Media}.

\bibitem{Wang12}
Wang, X., Gerber, M.S., \& Brown, D.E. (2012). Automatic Crime Prediction Using Events Extracted from Twitter Posts. {\it International Conference on Social Computing, Behavioral-Cultural Modeling and Prediction}.

\bibitem{Yang98}
Yang, Y., Pierce, T., \& Carbonell, J. (1998). A Study of Retrospective and On-line Event Detection. {\it ACM SIGIR Conference on Research and Development in
Information Retrieval}.

\bibitem{Phuvipa10}
Phuvipadawat, S., \& Murata, T. (2010). Breaking News Detection and Tracking in Twitter. {\it IEEE/WIC/ACM International Conference on Web Intelligence and Intelligent Agent Technology}.

\bibitem{Popescu11}
Popescu, A. M., Pennacchiotti, M., \& Paranjpe, D. (2011). Extracting Events and Event Descriptions from Twitter. {\it International Conference Companion on World Wide Web}.

\bibitem{Benson11}
Benson, E., Haghighi, A., \& Barzilay, R. (2011). Event Discovery in Social Media Feeds. {\it Annual Meeting of the Association for Computational Linguistics: Human Language
Technologies}.

\bibitem{Sakaki10}
Sakaki, T., Okazaki, M., \& Matsuo, Y. (2010). Earthquake Shakes Twitter Users: Real-Time Event Detection by Social Sensors. {\it International Conference on World Wide Web}.

\bibitem{Petrovic10}
Petrovic, S., Osborne, M., \& Lavrenko, V. (2010). Streaming First Story Detection with Application to Twitter. {\it Human Language Technologies: Annual Conference of the North American Chapter of the Association for Computational Linguistics}.

\bibitem{Becker11}
Becker, H., Naaman, M., \& Gravano, L. (2011). Beyond Trending Topics: Real-World Event Identification on Twitter. {\it International Conference on Weblogs and Social Media}.

\bibitem{Long11}
Long, R., Wang, H., Chen, Y., Jin, O., \& Yu, Y. (2011). Towards Effective Event Detection, Tracking and Summarization on Microblog Data. {\it Web-Age Information Management, Vol. 6897 of Lecture Notes in Computer Science}. Edited by Wang, H.,  Li, S.,  Oyama, S., Hu, X., \&  Qian, T. Springer: Berlin/Heidelberg, 652�663.

\bibitem{Weng11}
Weng, J., \& Lee, B.-S. (2011). Event Detection in Twitter. {\it International Conference on Weblogs and Social Media}.

\bibitem{Cordeiro12}
Cordeiro, M. (2012). Twitter Event Detection: Combining Wavelet Analysis and Topic Inference Summarization. {\it Doctoral Symposium on Informatics Engineering}.

\bibitem{mmfa}
Y{\i}lmaz, Y., \& Hero, A. (2015). Multimodal Factor Analysis. {\it IEEE International Workshop on Machine Learning for Signal Processing}.

\bibitem{Khan10}
Khan, M.E., Bouchard, G., Marlin, B.M., \& Murphy, K.P. (2010). Variational Bounds for Mixed-Data Factor Analysis. {\it Neural Information Processing Systems (NIPS) Conference}.

\bibitem{Adali15}
Adali, T., Levin-Schwartz, Y., \& Calhoun, V.D. (2015). Multimodal Data Fusion Using Source Separation: Two EffectiveModels Based on ICA and IVA and Their Properties. {\it Proceedings of the IEEE}, 103(9), 1478--1493.

\bibitem{Bramon12}
Bramon, R., Boada, I., Bardera, A., Rodriguez, J., Feixas, M., Puig, J., \& Sbert, M. (2012). Multimodal Data Fusion Based on Mutual Information. {\it IEEE Transactions on Visualization and Computer Graphics}, 18(9), 1574--1587.

\bibitem{Wu04}
Wu, Y., Chang, K.C.-C., Chang, E.Y., \& Smith, J.R. (2004). Optimal Multimodal Fusion for Multimedia Data Analysis. {\it ACM International Conference on Multimedia}.

\bibitem{Sui12}
Sui, J., Adali, T., Yu, Q., Chen, J., \& Calhoun, V.D. (2012). A review of Multivariate Methods for Multimodal Fusion of Brain Imaging Data. {\it Journal of Neuroscience Methods}, 204(1), 68--81.

\bibitem{Ngiam11}
Ngiam, J., Khosla, A., Kim, M., 	Nam, J., Lee, H., \& Ng, A.Y. (2011). Multimodal Deep Learning. {\it International Conference on Machine Learning}.

\bibitem{Chris08}
Christoudias, C.M., Urtasun, R., \& Darrell, T. (2008). Multi-View Learning in the Presence of View Disagreement. {\it Conference on Uncertainty in Artificial Intelligence}.

\bibitem{He11}
He, J., \& Lawrence, R. (2011). A Graph-Based Framework for Multi-Task Multi-View Learning. {\it International Conference on Machine Learning}.

\bibitem{Sun13}
Sun, S. (2013). A Survey of Multi-View Machine Learning. {\it Neural Computing and Applications}, 23(7), 2031--2038.

\bibitem{Mardia00}
Mardia, K.V., \& Jupp, P.E. (2000). Directional Statistics. Chichester: Wiley. 

\bibitem{Mardia76}
Mardia, K.V., \& El-Atoum, S.A.M. (1976). Bayesian Inference for the Von Mises-Fisher Distribution. {\it Biometrika}, 63(1), 203--206.

\bibitem{Harman76}
Harman, H.H., (1976). Modern Factor Analysis. University of Chicago Press. 

\bibitem{Abram72}
Abramowitz, M., \& Stegun, I.A. (1972). Handbook of Mathematical Functions. {\it National Bureau of Standards Applied Mathematics Series}, 55.

\bibitem{Banerjee05}
Banerjee, A., Dhillon, I.J., Ghosh J., \& Sra, S. (2005). Clustering on the Unit Hypersphere using von Mises-Fisher Distributions. {\it Journal of Machine Learning Research}, 6(Sep), 1345--1382.

\bibitem{Bohning92}
B\"ohning, D. (1992). Multinomial Logistic Regression Algorithm. {\it Ann. Inst. Statist. Math.} 44(1), 197--200.
% etc
\end{thebibliography}

% Non-BibTeX users please use

\end{document}